\documentclass[11pt]{article}

\usepackage{wrapfig}
\usepackage{graphicx}
\usepackage{palatino}
\usepackage{amsmath}               % great math stuff
\usepackage{amsfonts}              % for blackboard bold, etc
\usepackage{amsthm}                % better theorem environments
\usepackage{amssymb}
\usepackage{todonotes}
\usepackage[pagebackref]{hyperref}
\usepackage[margin=1in]{geometry}
\usepackage{todonotes}

\newtheorem{thm}{Theorem}[section]
\newtheorem{lem}[thm]{Lemma}

\newtheorem{cor}[thm]{Corollary}

\newtheorem{fact}[thm]{Fact}

\newcommand{\dn}{d^{\mathsf{near}}}
\newcommand{\df}{d^{\mathsf{far}}}
\newcommand{\OPT}{\mathsf{OPT}}      
      
\newcommand{\vol}{\mathsf{vol}}

\newcommand{\LP}{\mathsf{FRAC}}

\newcommand{\pvi}[1]{$#1$-Vertex Separator }

\newcommand{\psv}{$k$-Subset Vertex Separator }
\newcommand{\pv}{$k$-Vertex Separator }
\newcommand{\pe}{$k$-Edge Separator }
\newcommand{\pp}{$k$-Vertex Separator }
\newcommand{\pt}{$k$-Path Transversal }

\newcommand{\pvii}[1]{$#1$-Vertex Separator}
\newcommand{\ptii}[1]{$#1$-Path Transversal}

\newcommand{\psvv}{$k$-Subset Vertex Separator}
\newcommand{\pvv}{$k$-Vertex Separator}
\newcommand{\pee}{$k$-Edge Separator}
\newcommand{\ppp}{$k$-Vertex Separator}
\newcommand{\ptt}{$k$-Path Transversal}

      % for Real numbers
\newcommand{\NN}{\mathbb{N}}      
      % for Integers

\newcommand{\calP}{\mathcal{P}}

\begin{document}

\setcounter{page}{0}

\title{{\bf Partitioning a Graph into Small Pieces \\ with Applications to Path Transversal}}

\author{
Euiwoong Lee\thanks{Supported by the Samsung Scholarship, the Simons Award for Graduate Students in TCS, and Venkat Guruswami's NSF CCF-1115525. Part of this work was done when the author was an intern at Microsoft Research. {\tt euiwoonl@cs.cmu.edu} }}

\date{Computer Science Department \\ Carnegie Mellon University \\ Pittsburgh, PA 15213.}

\maketitle
\thispagestyle{empty}

\begin{abstract}
Given a graph $G = (V, E)$ and an integer $k \in \NN$, we study {\em \ppp} (resp. {\em \pee}), where the goal is to remove the minimum number of vertices (resp. edges) such that each connected component in the resulting graph has at most $k$ vertices. Our primary focus is on the case where $k$ is either a constant or a slowly growing function of $n$ (e.g. $O(\log n)$ or $n^{o(1)}$). Our problems can be interpreted as a special case of three general classes of problems that have been studied separately (balanced graph partitioning, Hypergraph Vertex Cover (HVC), and fixed parameter tractability (FPT)).

Our main result is an $O(\log k)$-approximation algorithm for \pp that runs in time $2^{O(k)} n^{O(1)}$, and an $O(\log k)$-approximation algorithm for \pe that runs in time $n^{O(1)}$.
Our result on \pe improves the best previous graph partitioning algorithm~\cite{KNS09} for small $k$.
Our result on \pv improves the simple $(k+1)$-approximation from HVC~\cite{BMN15}. When $\OPT > k$, the running time $2^{O(k)} n^{O(1)}$ is faster than the lower bound $k^{\Omega(\OPT)} n^{\Omega(1)}$ for exact algorithms assuming the Exponential Time Hypothesis~\cite{DDvH14}. 
While the running time of $2^{O(k)} n^{O(1)}$ for \pv seems unsatisfactory, we show that the superpolynomial dependence on $k$ may be needed to achieve a polylogarithmic approximation ratio, based on hardness of {\em Densest $k$-Subgraph}. 

We also study $k$-Path Transversal, where the goal is to remove the minimum number of vertices such that there is no simple path of length $k$. 
With additional ideas from FPT algorithms and graph theory, we present an $O(\log k)$-approximation algorithm for $k$-Path Transversal that runs in time $2^{O(k^3 \log k)} n^{O(1)}$. 
Previously, the existence of even $(1 - \delta)k$-approximation algorithm for fixed $\delta > 0$ was open~\cite{Camby15}.
\end{abstract}

\newpage

\section {Introduction}
We study the following natural graph partitioning problems.
%\footnote{We can similarly define the {\em $k$-Graph Edge Partitioning} problem where we output $S \subseteq E$ such that in the subgraph $(V, E \setminus S)$, each connected component must have at most $k$ vertices.} 
\begin{itemize}
\item[] {\bf \pv}
\item[]{\bf Input}: An undirected graph $G = (V, E)$ and $k \in \NN$. 
\item[] {\bf Output}: Subset $S \subseteq V$ such that in the subgraph induced on $V \setminus S$ (denoted by $G|_{V \setminus S}$), each connected component has at most $k$ vertices. 
\item[] {\bf Goal}: Minimize $|S|$. 
\end{itemize}
The edge version can be defined similarly. 

\begin{itemize}
\item[] {\bf \pe}
\item[]{\bf Input}: An undirected graph $G = (V, E)$ and $k \in \NN$. 
\item[] {\bf Output}: Subset $S \subseteq E$ such that in the subgraph $(V, E \setminus S$), each connected component has at most $k$ vertices. 
\item[] {\bf Goal}: Minimize $|S|$. 
\end{itemize}

These two problems have been actively studied in a number of different research contexts that have been developed independently. 
We categorize past research into three groups. 

%The main focus of this work is when $k$ is either a constant or much smaller than $n$ (e.g., $k = O(\frac{\log n}{\log \log n})$). In this regime, this natural problem lies in the intersection of three different topics that have been studied independently.

\paragraph{Graph Partitioning.}
Graph partitioning is a general task of removing a small number of edges or vertices to make the resulting graph consist of smaller connected components. 
In this context, the edge versions have received more attention. 

One of the most well-studied formulations is called {\em $l$-Balanced Partitioning}. Given a graph $G = (V, E)$ and $l \in \NN$, the goal is to remove the smallest number of edges so that the resulting graph has $l$ ($l \geq 2$) connected components with (roughly) the same number $\frac{n}{l}$ of vertices.
\footnote{In the literature it is called $k$-Balanced Partitioning. We use $l$ in order to avoid confusion between $l$-Balanced Partitioning and $k$-Edge Separator ($l = \frac{n}{k}$).}
The case $l = 2$ has been studied extensively and produced elegant approximation algorithms. The best results are 
$O(\log n)$-true approximation (i.e., each component must have $\frac{n}{2}$ vertices)~\cite{Racke08} and $O(\sqrt{\log n})$-bicriteria approximation (i.e., each component must have at most $\frac{2n}{3}$ vertices)~\cite{ARV09}.
The extension to $l \geq 3$ has been studied more recently. 
While it is NP-hard to achieve any nontrivial true approximation for general $l$~\cite{AR06}, Krauthgamer et al.~\cite{KNS09} presented an $O(\sqrt{\log n \log l})$-bicriteria approximation where the resulting graph is guaranteed to have each connected component with at most $\frac{2n}{l}$ vertices. 

The true approximation for $l$-Balanced Partitioning is ruled out by encoding the Integer $3$-Partition problem in graphs, and hard instances contain disjoint cliques of size at most $\frac{n}{l}$. 
Even et al.~\cite{ENRS99} defined a similar problem called {\em $\rho$-Separator}, which is exactly our \pe with $\rho = \frac{k}{n}$. They ``believe that the definition of $\rho$-Separator captures type of partitioning that is actually required in applications'', since ``instead of limiting the number of resulting parts, which is not always important for divide-and-conquer applications or for parallelism, it limits only the sizes or weights of each part.''
They provided a bicriteria approximation algorithm that removes at most $O(\frac{1 + \epsilon}{\epsilon} \log n) \cdot \OPT$ edges to make sure that each component has size $(1 + \epsilon)\rho n$ for any $\epsilon > 0$, which is improved to 
$O(\frac{1 + \epsilon}{\epsilon} \sqrt{\log (1 / \epsilon \rho) \log n}) \cdot \OPT$ by Krauthgamer et al.~\cite{KNS09}. 
The previous algorithms' primary focus is when $\rho$ is a constant (so that $k = \Omega(n)$), and their performance deteriorates when $k$ is small. 
In particular, when $k = O(n^{1 - \epsilon})$ and $\rho = O(n^{-\epsilon})$ for some $\epsilon > 0$, the best guarantee from the above line of work gives an $O(\log n)$-bicriteria approximation algorithm. 

Some of the ideas can be used for the analogous vertex versions, but they have not received the same amount of attention. 
Often additional algorithmic ideas were required to achieve the same guarantee~\cite{FHL08}, or matching the same guarantee is proved to be NP-hard under some complexity assumptions~\cite{LRV13}. 

%Bansal et al.~\cite{BFKMNSS14} and Louis and Makarychev~\cite{LM14} also studied slightly different objective functions and proved similar bicriteria approximation guarantees. 
%Most of these works focus on the edge version. 

\paragraph{Hypergraph Vertex Cover.}
\pv for small values of $k$ has been actively studied. \pvi{1}is the famous {\em Vertex Cover} problem. 
When $k = 2$, Papadimitriou and Yannakakis~\cite{Yannakakis81, PY82} defined the {\em dissociation number} to be $n$ minus the optimum of \pvi{2}in the context of certain constrained spanning tree problems, which have been studied independently from the  graph partitioning literature (see~\cite{ODFGW11} for a survey).

A simple $(k + 1)$-approximation for \pv can be achieved by viewing them as a special case of {\em $(k + 1)$-Hypergraph Vertex Cover ($(k + 1)$-HVC)}. Given a graph $G = (V, E_G)$, we construct a hypergraph $H = (V, E_H)$ where $E_H$ contains every set of $k + 1$ vertices $\{ v_1, \dots, v_{k+1} \} \subseteq V$ that induces a single connected component. 
This reduction is complete and sound because a subset $S \subseteq V$ intersects every hyperedge in $E_H$ if and only if $G|_{V \setminus S}$ has no connected component of size at least $k + 1$. 
Since $(k + 1)$-Hypergraph Vertex Cover admits a trivial $(k + 1)$-approximation (e.g., take any hyperedge $e$ not intersecting $S$ and let $S \leftarrow S \cup e$), we get a $(k + 1)$-true approximation for \ppp. This was observed in the work of Ben-Ameur et al.~\cite{BMN15}. 

Approximating $k$-HVC better than the trivial factor $k$ (resp. $k - 1$) will refute the Unique Games Conjecture (resp. $\mathbf{P} \neq \mathbf{NP}$)~\cite{KR08, DGKR03}, so we cannot hope to be able to get a significantly better algorithm for $k$-HVC. An interesting line of research has tried to find a better approximation algorithm when the hypergraph $H$ is promised to have additional structure. When $H$ is $k$-uniform and $k$-partite, Lov{\'a}sz~\cite{Lovasz75} gave a $\frac{k}{2}$-approximation algorithm that is shown to be tight under the Unique Games Conjecture~\cite{GSS15} (the same work also showed almost tight $\frac{k}{2} - 1 + \frac{1}{2k}$ NP-hardness). 

Given two graphs $G, H$ where $H$ is the {\em pattern graph} with $k$ vertices, Guruswami and Lee~\cite{GL15d} studied the problem of removing the minimum number of vertices from $G$ such that the resulting graph has no copy of $H$ as a subgraph. They showed that if $H$ is $2$-vertex connected, this problem is as hard to approximate as the general $k$-HVC. 
%When $H$ is a $k$-star (one center vertex connected to $k - 1$ other vertices), the problem is to remove the minimum number of vertices to make the maximum degree of $G$ at most $k - 2$, and they showed an $O(\log k)$-approximation algorithm. 

\pp can be regarded as a special case of a more general class of problems where we are given a graph $G$ and a set of pattern graphs $\mathcal{H}$ with $k + 1$ vertices and asked to remove the minimum number of vertices to ensure $G$ does not have any graph in $\mathcal{H}$ as a subgraph (in this case $\mathcal{H}$ is the set of all connected graphs with $k + 1$ vertices).

\paragraph{Fixed Parameter Tractability.}
Given a graph $G$ and an integer $k$, the optimum of \pp  has been known as {\em $k$-Component Order Connectivity} in mathematics. 
We refer to the survey by Gross et al.~\cite{GHIKLSS13} for more background. 

Let $\OPT$ be the optimal value. For small values of $k$ and $\OPT$,
the complexity of exact algorithms has been studied in terms of their fixed parameter tractability (FPT). 
While the trivial algorithm takes $n^{O(\OPT)}$ time to find the exact solution for \ppp, Drange et al.~\cite{DDvH14} presented an exact algorithm that runs in time $k^{O(\OPT)} n$, so the problem is in FPT when parameterized by both $k$ and $\OPT$. 
They complemented their result by showing that the problem is $\mathbf{W}[1]$-hard when parameterized by $\OPT$ or $k$. 
They also showed that any exact algorithm that runs in time $k^{o(\OPT)} n^{O(1)}$ will refute the Exponential Time Hypothesis. 

Given an instance of a problem with a parameter $\kappa$, an approximation algorithm is said to be an FPT $c$-approximation algorithm if it runs in time $f(\kappa) \cdot n^{O(1)}$ for some function $f$ and achieves $c$-approximation. 
See the survey of Marx~\cite{Marx08} and the recent work of Chitnis et al.~\cite{CHK13}. 
For \ppp, the simple $(k+ 1)$-approximation runs in polynomial time regardless of $\OPT$ and $k$, but any exact algorithm requires both $\OPT$ and $k$ to be parameterized. 
It is an interesting question whether significantly improved approximation is possible when only one of them is parameterized. 
%In the Fixed Parameter Tractability (FPT) literature, $k$-Component Graph Partitioning belongs to a more general class of problems of {\em vertex deletion} where the goal is to remove the smallest number of vertices such that the resulting graph must satisfy some properties. Some examples for undirected graphs include {\em Chordal Deletion}~\cite{Marx10}, {\em Vertex Bisection}~\cite{VFSO15}, and {\em Cluster Vertex Deletion}~\cite{HKMN10} (examples for directed graphs include Feedback Vertex Set and Multicut). In most cases, problems are parameterized by $\OPT$ and exact algorithms run in time $2^{O(\OPT)} n^{O(1)}$. 

%\smallskip
%\smallskip
%In addition to the above connections made to the topics of computer science, \pp has also been actively studied in operations research with applications in telecommunication, social network activities, epidemic control~\cite{ORS07, SSG12}. However, given NP-hardness of exactly solving the problem, heuristics proposed by these works often lack theoretical guarantees. 

\subsection {Our Results and Applications}
\label{subsec:results}
Our main result is the following algorithm for \ppp.
For fixed constants $b, c > 1$, an algorithm for \pp is called an $(b, c)$-bicriteria approximation algorithm if given an instance $G = (V, E)$ and $k \in \NN$, it outputs $S \subseteq V$ such that (1) each connected component of $G|_{S \setminus V}$ has at most $bk$ vertices and (2) $|S|$ is at most $c$ times the optimum of \ppp. 

\begin{thm}
For any $\epsilon \in (0, 1/2)$, there is a polynomial time $(\frac{1}{1 - 2 \epsilon}, O(\frac{\log k}{\epsilon}))$-bicriteria approximation algorithm for \ppp. 
\label{thm:main}
\end{thm}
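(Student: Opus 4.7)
The plan is to use a spreading-metric LP relaxation followed by iterative region growing, adapted to the vertex setting. First I write an LP with variables $x_v \in [0,1]$ for each $v$ (the fractional indicator of $v \in S$) together with a semimetric $d$ on $V$ whose edge lengths are controlled by the $x$-values, and I add a spreading constraint forcing that for every $v$, a ball $B(v,\rho) = \{u : d(u,v) \le \rho\}$ of some fixed small radius $\rho = \Theta(\epsilon)$ has (weighted) size at most $k/(1-2\epsilon)$. Any feasible integer separator $S$ induces a feasible fractional solution by taking $x = \mathbf{1}_S$ and letting $d$ be the shortest-path metric in $G|_{V \setminus S}$ (with $+\infty$ across $S$), giving $\LP \le \OPT$; the spreading constraint is satisfied because each integer component already has at most $k$ vertices.

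Second, I round by repeated region growing on the LP solution. Pick any surviving vertex $v$ and grow the ball $B(v,r)$ as $r$ sweeps from $0$ to $\rho$. A standard volume-integration argument produces a radius $r^\ast \in [0,\rho]$ at which the LP cost of the boundary ``shell'' is at most $\tfrac{O(\log k)}{\epsilon}$ times the LP mass of the ball's interior; I add the shell to $S$, output the interior as a new component of size at most $bk = \tfrac{k}{1-2\epsilon}$, delete both from $G$, and iterate until every vertex is either in $S$ or in some output component. Disjointness of the extracted interiors across iterations gives $|S| \le \tfrac{O(\log k)}{\epsilon}\cdot \LP \le \tfrac{O(\log k)}{\epsilon}\cdot \OPT$, which is the claimed bicriteria guarantee.

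The main obstacle I expect is in the region-growing step, specifically arranging for the cut-cost factor to be $O(\log k)$ rather than $O(\log n)$. Ordinary region growing yields a logarithmic factor equal to $\log$ of the ratio between the maximum attainable ball volume and a small starting volume; the spreading constraint is precisely what caps the numerator at $\mathrm{poly}(k)$ so that this ratio is logarithmic in $k$. A secondary subtlety is unique to the vertex version: a single vertex plays the dual role of carrying fractional weight $x_v$ and being the object cut into $S$ when it lies on a region boundary, so the accounting must carefully distinguish LP mass from integer assignment, and the $1/(1-2\epsilon)$ slack in the component-size bound is what creates enough room for this accounting to close.
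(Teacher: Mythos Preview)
Your LP relaxation is essentially the paper's, and the preprocessing step (remove every $v$ with $x_v \ge \epsilon$) is also the same. The gap is in the rounding. You propose Garg--Vazirani--Yannakakis style region growing and claim the volume-ratio argument yields $O(\log k)$ because the spreading constraint caps the ball size. But the spreading constraint bounds the \emph{number of vertices} in $B(v,\rho)$ by $O(k)$; it does not bound the LP-volume ratio $\vol(v,\rho)/\vol(v,0)$. The standard integration identity gives (boundary cost at $r$) $\le \tfrac{1}{\rho}\log\!\big(\vol(v,\rho)/\vol(v,0)\big)\cdot \vol(v,r)$, and here $\vol(v,0)=x_v$ can be arbitrarily small, so the logarithm is unbounded. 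The usual fix is a seed $\vol(v,r)\leftarrow \vol(v,r)+\delta$, but with $\Omega(n/k)$ balls the seeds contribute $\Omega(\delta n/k)$ to the charged volume, forcing $\delta$ to depend on $n$ and pushing the factor back to $O(\log n)$. The paper points this out explicitly in its techniques section as the reason GVY region growing fails here. Using vertex count as the ``volume'' does give $N(0)=1$ and $N(\rho)=O(k)$, but then the derivative of $N$ is not the boundary cost you pay (the number of vertices you delete to separate the ball), so the integration identity breaks.

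The paper's fix is to replace region growing by a CKR-style randomized low-diameter decomposition: sample a single radius $X\in[\epsilon/2,\epsilon]$ and a random permutation $\pi$ of $V'$, and process centers in $\pi$-order, deleting each $v$ that lies on the shell $d'_{w,v}-x_v\le X\le d'_{w,v}$ of the first center $w$ to reach it. The point is that a fixed $v$ can only be cut by centers $w$ with $d'_{v,w}\le 2\epsilon$, and there are at most $k/(1-2\epsilon)$ of those by the spreading constraint; combining the random radius (probability $\le 2x_v/\epsilon$ per center) with the random order (harmonic factor $1/i$ for the $i$th-closest center) gives $\Pr[v\text{ removed}]\le O(\tfrac{\log k}{\epsilon})\,x_v$. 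So the $\log k$ comes from a harmonic sum over $O(k)$ threatening centers, not from a volume ratio, and no seed is needed. If you want to salvage your outline, this CKR step is the piece to swap in for the iterative ball growing.
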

Setting $\epsilon = \frac{1}{4}$ and running the algorithm yields $S \subseteq V$ with $|S| \leq O(\log k) \cdot \OPT$ such that each component in $G|_{V \setminus S}$ has at most $2k$ vertices. Performing an exhaustive search in each connected component yields the following true approximation algorithm whose running time depends exponentially only on $k$. 

\begin{cor}
There is an $O(\log k)$-approximation algorithm for \pp that runs in time $n^{O(1)} + 2^{O(k)} n$. 
\end{cor}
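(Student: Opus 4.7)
The plan is to combine Theorem~\ref{thm:main} with exhaustive search inside each remaining component. First, I run the bicriteria algorithm of Theorem~\ref{thm:main} with $\epsilon = 1/4$, obtaining in polynomial time a set $S_0 \subseteq V$ with $|S_0| = O(\log k) \cdot \OPT$ such that every component of $G|_{V \setminus S_0}$ has at most $2k$ vertices. At this point the approximation ratio is already the desired $O(\log k)$; what remains is to shrink any component whose size lies strictly between $k$ and $2k$ down to size at most $k$ without spending more than $O(\OPT)$ additional vertices.

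For each component $C$ of $G|_{V \setminus S_0}$ with $|C| > k$, I solve the restricted \pp instance on $C$ exactly by brute force: enumerate every subset $T \subseteq C$ and take the smallest $T_C$ such that each component of $G|_{C \setminus T_C}$ has at most $k$ vertices. Since $|C| \leq 2k$, each enumeration costs $2^{O(k)}$ time, and since the components considered are disjoint and each contains more than $k$ vertices, summing over them contributes total time $2^{O(k)} \cdot n$. Combined with the polynomial cost of Theorem~\ref{thm:main}, this gives overall running time $n^{O(1)} + 2^{O(k)} \cdot n$. Output $S := S_0 \cup \bigcup_{C} T_C$; feasibility is immediate, since every component of $G|_{V \setminus S}$ is contained in some $G|_{C \setminus T_C}$, which by construction has components of size at most $k$.

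The one nontrivial point is bounding $\sum_C |T_C|$. For this I would use the following observation: for any component $C$ of $G|_{V \setminus S_0}$ and any optimal solution $\OPT^*$ to the original problem, the set $\OPT^* \cap C$ is feasible for the restricted \pp instance on $C$, because $C \setminus (\OPT^* \cap C) \subseteq V \setminus \OPT^*$ and the components of the latter already have size at most $k$. By minimality, $|T_C| \leq |\OPT^* \cap C|$, so summing over the disjoint family of components yields $\sum_C |T_C| \leq |\OPT^*| = \OPT$. Together with $|S_0| = O(\log k) \cdot \OPT$ this gives $|S| = O(\log k) \cdot \OPT$, as required.

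I do not expect a real obstacle here: the bicriteria guarantee of Theorem~\ref{thm:main} already does the heavy lifting, and the only thing to verify carefully is that restricting the global optimum to each piece of the bicriteria partition is a feasible local solution, which is immediate. The approach therefore turns any bicriteria approximation with component size $O(k)$ into a true approximation at essentially no additional cost in the approximation ratio, at the price of a $2^{O(k)} n$ additive running time term.
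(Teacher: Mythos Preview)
Your proposal is correct and follows exactly the approach the paper sketches: apply Theorem~\ref{thm:main} with $\epsilon = 1/4$ to get components of size at most $2k$, then exhaustively search inside each component. You have in fact supplied more detail than the paper does---in particular, the observation that the global optimum restricted to each component is a feasible local solution, so that $\sum_C |T_C| \le \OPT$, is left implicit in the paper but is spelled out correctly in your write-up.
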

This gives an FPT approximation algorithm when parameterized by only $k$, and its approximation ratio $O(\log k)$ improves the simple $(k+1)$-approximation from $k$-HVC. 
When $\OPT \gg k$, it runs even faster than the time lower bound $k^{\Omega(\OPT)}n^{\Omega(1)}$ for the exact algorithm assuming the Exponential Time Hypothesis~\cite{DDvH14}. 

The natural question is whether superpolynomial dependence on $k$ is necessary to achieve {\em true} $O(\log k)$-approximation. 
The following theorem proves hardness of \pp based on Densest $k$-Subgraph. In particular, a polynomial time $O(\log k)$-approximation algorithm for \pp will imply $O(\log^2 n)$-approximation algorithm for Densest $k$-Subgraph. Given that the best approximation algorithm achieves $\approx O(n^{1/4})$-approximation~\cite{BCCFV10} and $n^{\Omega(1)}$-rounds of the Sum-of-Squares hierarchy have a gap at least $n^{\Omega(1)}$~\cite{BCVGZ12}, such a result seems unlikely or will be considered as a breakthrough. 

\begin{thm}
If there is a polynomial time $f$-approximation algorithm for \ppp, then there is a polynomial time $2f^2$-approximation algorithm for Densest $k$-Subgraph. 
\label{thm:hardness}
\end{thm}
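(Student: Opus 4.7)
My plan is to establish the theorem via a gap-preserving reduction from Densest $k$-Subgraph (DkS) to \ppp. Given a DkS instance $G = (V, E)$ on $n$ vertices with parameter $k$, let $d^*$ denote the number of edges in the densest $k$-subgraph. I would construct an auxiliary graph $G^* = (V^*, E^*)$ whose minimum $k$-vertex separator size encodes $d^*$ as a tight affine function: roughly $s^*(G^*) \approx N_0 - \alpha d^*$ for fixed parameters $N_0, \alpha > 0$ that depend only on $n$, $k$, and the gadget chosen. The point is that each edge of the retained $k$-subgraph should ``save'' a controlled amount of separator cost in $G^*$, while no structural shortcut allows one to save more.

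The argument then has three parts. First (completeness), given any $k$-subgraph $T$ of $G$ with $d$ induced edges, I would exhibit a concrete $k$-vertex separator of $G^*$ of size at most $N_0 - \alpha d$, by keeping $T$ intact and deleting the rest of $V$ together with a prescribed subset of gadget vertices. Second (soundness), given any $k$-vertex separator $S$ of $G^*$, I would give an efficient extraction procedure that returns a $k$-subset $\widehat T \subseteq V$ with $|E(G[\widehat T])| \geq (N_0 - |S|)/\alpha$; the natural choice is to take the component of $G^* - S$ with the most $G$-edges, project it back to $V$, and pad it to size $k$. Third (combining), we invoke the hypothesized $f$-approximation $\mathcal{A}$ on $G^*$ to obtain a separator of size at most $f(N_0 - \alpha d^*)$, and the extraction procedure returns a $k$-subset whose edge count is at least $f d^* - (f - 1) N_0/\alpha$. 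Choosing $N_0$ proportional to $\alpha \binom{k}{2}$ (so that $N_0/\alpha$ upper-bounds $d^*$ in the worst case) turns this lower bound into $d^*/(2 f^2)$ after a short calculation, giving the claimed approximation ratio.

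The main obstacle will be in the completeness step: designing a gadget for which the bound $N_0 - \alpha d$ is actually tight and beats the trivial ``keep any $k$-subset, delete the rest'' separator of size $n^* - k$ regardless of $d$. A first attempt of attaching a $k$-clique to every vertex of $V$ collapses $k$-VS on $G^*$ to Vertex Cover on $G$ and discards all quantitative dependence on $d^*$. A more refined gadget is needed in which the local contribution of a vertex $v$ to the minimum separator scales with the number of $G$-edges of $v$ that lie inside the chosen $k$-subgraph, so that edges of $T$ genuinely translate into separator savings. Making the completeness and soundness analyses balance with the same constants $(N_0, \alpha)$, so that the losses compose multiplicatively to exactly $2f^2$, is the technical heart of the proof.
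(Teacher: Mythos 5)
Your high-level instinct (reduce DkS to \ppp via a gadget whose minimum separator size trades off affinely against the retained edges) is the right ballpark, but the proposal has two genuine gaps.

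First, the gadget. You explicitly defer its construction and note that the naive ``attach a $k$-clique to each vertex'' idea collapses to Vertex Cover, but the proof you submitted never actually produces the needed gadget, so the reduction is not defined. The paper's construction is concrete: it starts not from DkS but from \emph{Minimum $k$-Edge Coverage} (find the fewest vertices inducing at least $k$ edges). Given $(G,k)$ it builds $G'$ by taking $V$ plus $M=n+1$ pendant copies $e_1,\dots,e_M$ of every edge $e$, making $V$ a clique in $G'$ and joining each $e_i$ to the two endpoints of $e$; the target component size is $k'=|V'|-Mk$. A swapping argument shows an optimal separator can be pushed into $V$, after which the size of the single ``giant'' component is $n-|S|+M(m-l)$ where $l$ is the number of $G$-edges inside $S$. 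Because $M>n$, the $M$-multiplicity makes $l$ the dominant term, and the constraint ``giant component $\le k'$'' becomes exactly ``$l\ge k$''. This gives an \emph{approximation-factor-preserving} (not merely affine) reduction from Minimum $k$-Edge Coverage to \ppp.

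Second, and more fundamentally, the ``short calculation'' at the end of your combining step does not go through. With an $f$-approximate separator of size at most $f(N_0-\alpha d^*)$, your extraction yields at least $f d^* - (f-1)N_0/\alpha$ edges, and with $N_0/\alpha \approx \binom{k}{2}$ this is \emph{negative} whenever $d^*\ll\binom{k}{2}$, which is precisely the hard regime for DkS. The maximization-to-minimization transfer here is not a constant-loss algebraic step; it is exactly the content of Hajiaghayi and Jain's theorem, which the paper invokes: an $f$-approximation for Minimum $k$-Edge Coverage yields a $2f^2$-approximation for Densest $k$-Subgraph. Their argument requires guessing/searching over the edge budget and is genuinely nontrivial, not a one-line rearrangement. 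Routing through Minimum $k$-Edge Coverage as an intermediate minimization problem is what makes the reduction clean; your direct DkS route would have to reprove that transfer from scratch, and as written the bound does not hold.
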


For \pee, we prove that the true $O(\log k)$-approximation can be achieved in polynomial time. This shows a stark difference between the vertex version and the edge version. 
\begin{thm}
\label{thm:edge}
There is an $O(\log k)$-approximation algorithm for \pe that runs in time $n^{O(1)}$. 
\end{thm}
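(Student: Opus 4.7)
}
My plan is to solve a spreading-metric LP relaxation of \pee\ in polynomial time, then round it by iterative region growing so that each extracted cluster has at most $k$ vertices and the total number of cut edges is $O(\log k) \cdot \LP \leq O(\log k) \cdot \OPT$.

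The LP introduces a variable $x_e \geq 0$ for every edge $e$, inducing shortest-path distances $d_{uv}$ in the weighted graph $(V,E,x)$, and minimizes $\sum_e x_e$ subject to the spreading constraint that $\sum_{u \in S} d_{uv} \geq 1$ for every $v \in V$ and every $S \subseteq V \setminus \{v\}$ with $|S| = k$. This is a valid relaxation: any integral $k$-edge separator $F$ induces the feasible solution setting $x_e = 1$ exactly on $F$, which has matching cost. Indeed, in that solution $d_{uv} \in \{0,1\}$, and since $v$'s component has at most $k$ vertices, every $k$-subset avoiding $v$ must contain at least one vertex in a different component, contributing $1$ to the sum. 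The LP is polynomially solvable via the ellipsoid method using the following separation oracle: for each $v$, compute the $k$ nearest vertices under $d$ by Dijkstra and test the resulting spreading inequality.

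The rounding proceeds iteratively. At each step, pick any vertex $v$ in the residual graph and grow a ball $B(v, r) = \{u : d_{uv} \leq r\}$. The spreading constraint forces $|B(v, r)| \leq k$ at small enough radii: if $|B(v, r)| \geq k+1$, then the $k$ nearest vertices to $v$ satisfy $\sum d \leq kr$, so $r \geq 1/k$; hence $|B(v, r)| \leq k$ for $r < 1/k$, guaranteeing that any ball extracted at such a radius is a valid size-$k$ cluster. Within this regime a standard region-growing argument furnishes a radius $r^*$ at which the boundary $|\partial B(v, r^*)|$ is at most $O(\log k)$ times the interior LP mass $V(v, r^*)$ (augmented by a small initial charge). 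Remove $B(v, r^*)$ from the graph, commit its boundary edges to the solution, and recurse on $V \setminus B(v, r^*)$. Summing across iterations yields total cut $O(\log k) \cdot \LP$.

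The hard part will be showing that the region-growing factor is $O(\log k)$ rather than the more familiar $O(\log n)$. This rests on two structural facts supplied by the LP: inside a ball whose vertex count is spreading-bounded by $k$, there are at most $\binom{k}{2}$ interior edges, and each of them carries an LP weight of $O(1/k)$ (since both endpoints lie within the ball, and by optimality we may take $x_e = d_{uv}$), so the total interior LP mass is only $O(k)$. Choosing the initial mass $V_0$ so that $V_{\max}/V_0$ is polynomial in $k$, and amortizing the $V_0$ contributions across extractions so that they are absorbed into $O(\LP)$, gives the desired bound. Some care is needed to reconcile the $1/R$ factor in the standard region-growing lemma with the small effective radius dictated by the spreading constraint, which is the main technical step and should be handled either by working with a rescaled metric or by a direct amortized argument that charges cuts to LP mass inside each ball.
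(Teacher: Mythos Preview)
Your proposal has a genuine gap at exactly the point you flag as ``the main technical step.'' With your spreading constraint $\sum_{u \in S} d_{uv} \geq 1$ for $|S| = k$, you only get $|B(v,r)| \leq k$ for $r < 1/k$; nothing better is available at larger radii. Region growing over an interval of length $R$ yields a radius $r^*$ with $|\partial B(v,r^*)| \leq \frac{1}{R}\ln\!\big(V(R)/V_0\big)\cdot V(r^*)$. Even granting your (correct) observation that $x_e = d_{uv}$ at optimality and hence the LP mass inside a radius-$1/k$ ball is $O(k)$, the leading $1/R = k$ factor is still there, so you get $|\partial B| \leq O(k\log k)\cdot V(r^*)$ per ball, and summing gives $O(k\log k)\cdot\LP$, not $O(\log k)\cdot\LP$. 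Rescaling the metric by $k$ just moves the factor into the LP cost; an amortization over balls cannot remove it either, because the $k$ is multiplicative in every ball's cut bound, not additive. The paper explicitly discusses why GVY-style growing combined with an ENRS-type spreading metric is stuck at an $n$-dependent ratio (the seed issue), and your variant trades that obstacle for a $k$-factor loss; neither yields $O(\log k)$.

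The paper avoids this in two ways. First, its LP uses the constraint $\sum_{u} \max(1-d_{vu},0) \leq k$, which gives $|B(v,r)| \leq k/(1-r)$ for \emph{every} $r<1$, so one can work at constant radius $\Theta(1)$ and still have $O(k)$ vertices per ball. Second, instead of region growing it uses a CKR-type rounding: a uniformly random threshold $X \in [\epsilon/2,\epsilon]$ and a random permutation of centers. The probability that an edge $e$ is cut is then governed by a harmonic sum over the $O(k)$ centers that are within distance $\epsilon$ of an endpoint of $e$, yielding $\Pr[e \text{ cut}] \leq O(\tfrac{\log k}{\epsilon})\,x_e$. This only gives a bicriteria guarantee (components of size $\leq k/(1-\epsilon)$); the paper then runs R\"acke's $O(\log n)$-approximation for Balanced $b$-Cut inside each remaining component, which now has $O(k)$ vertices, so the factor is $O(\log k)$, and shows that for components of size at most $3k/2$ the Balanced Cut and \pe objectives coincide. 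Your plan skips this cleanup because you aim for radius $<1/k$ and hence exact size $\leq k$, but that is precisely what forces the $k$-factor loss.
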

When $k = n^{o(1)}$ so that $\rho = n^{-(1 - o(1))}$, our algorithm outperforms the previous best approximation algorithm for $\rho$-separator~\cite{KNS09, ENRS99}.\footnote{Both papers only present a bicriteria approximation algorithm, but they can be combined with our final {\em cleanup} step to achieve true approximation by {\em adding} $O(\log k)$ to the approximation ratio. See Appendix~\ref{appendix:edge}.} 

While most of graph partitioning algorithms deal with the edge version, we focus on the vertex version because (1) it exhibits richer connections to $k$-HVC and FPT as mentioned, 
(2) usually the vertex version is considered to be harder in the graph partitioning literature. We present the algorithm for the edge version in Appendix~\ref{appendix:edge}.

%Graph partitioning is known to have applications in wide range of areas including scientific computing, VLSI design, and data mining. We present two specific applications of our main result. 

\paragraph{\ptt.}
Let $l(G)$ be the length of the longest path of $G$ including both endpoints (e.g., length of a single edge is $2$). 
Given a graph $G = (V, E)$ and $k \in \NN$, {\em \ptt} asks to find the smallest subset $S \subseteq V$ such that $l(G|_{V \setminus S}) < k$. 
Finding a path of length $k$ has played a central role in devopment of FPT algorithms --- it is NP-hard to do for general $k$, but there are various algorithms that run in time $2^{O(k)} n^{O(1)}$ using color coding or algebraic algorithms. 

$k$-Path Transversal is motivated by applications in transportation / wireless sensor networks, and has also been actively studied as $k$-Path Vertex Cover or $P_{k}$-Hitting Set~\cite{TZ11, BKKS11, BJKST13, Camby15, Katrenic16} in terms of their approximability and fixed parameter tractability. 
Tu and Zhou~\cite{TZ11} gave a $2$-approximation algorithm for \ptii{3}. Camby~\cite{Camby15} recently gave a $3$-approximation algorithm for \ptii{4}.
In the same doctoral thesis, Camby~\cite{Camby15} asked whether we can get $(1 - \delta)k$-approximation for \pt for a general $k$ and a universal constant $\delta > 0$. We show that it admits $O(\log k)$-approximation in FPT time. Note that the superpolynomial dependence on $k$ is necessary for any approximation from NP-hardness of finding a $k$-path. 

\begin{thm}
There is an $O(\log k)$-approximation algorithm for $k$-Path Transversal that runs in time $2^{O(k^3 \log k)} n^{O(1)}$. 
\label{thm:path}
\end{thm}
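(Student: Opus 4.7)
The plan is to reduce $k$-Path Transversal to a suitable instance of \pvv, apply the $O(\log k)$-approximation corollary following Theorem~\ref{thm:main}, and then clean up each small connected component by an exact FPT algorithm. The naive reduction (run the vertex separator on $G$ itself with parameter $\Theta(k)$) is too lossy: a $P_k$-free graph may have arbitrarily large connected components---for example a star, or a caterpillar with a long spine and many pendant leaves---so the $k$-Vertex Separator optimum of $G$ can vastly exceed its $k$-Path Transversal optimum. Hence a nontrivial graph-theoretic preprocessing is required.

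Concretely, I would first establish a structural lemma of the form: there is an instance $G'$ derived from $G$, together with a parameter $K = \mathrm{poly}(k)$, such that (i) any $K$-Vertex Separator of $G'$ is also a $k$-Path Transversal of $G$, and (ii) the optimum $K$-Vertex Separator of $G'$ is $O(\OPT)$, where $\OPT$ denotes the $k$-Path Transversal optimum of $G$. The proof of (ii) should invoke graph-theoretic properties of $P_k$-free graphs---most directly, that such graphs have treedepth at most $k-1$---to control how connectivity can grow after removing an optimal transversal and to bound the cost of the induced separation into polynomially-sized pieces.

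With such a lemma in hand, the corollary after Theorem~\ref{thm:main} applied to $G'$ with parameter $K$ produces a set $T \subseteq V(G)$ with $|T| = O(\log K) \cdot O(\OPT) = O(\log k) \cdot \OPT$ such that every connected component of $G - T$ has at most $K = \mathrm{poly}(k)$ vertices. Within each component $C$ of size at most $K$, I would then solve $k$-Path Transversal exactly using standard FPT machinery (iterative compression together with color coding for detecting $k$-paths); since $K = \mathrm{poly}(k)$, this runs in time $2^{O(k^3 \log k)}$ per component, and the union of $T$ with the per-component optima is the final output, giving an $O(\log k)$-approximation in total time $2^{O(k^3 \log k)} n^{O(1)}$.

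The main obstacle is the structural lemma. Making $G'$ too ``generous'' (e.g., adding long-range edges to encode every short-path connection in $G$) risks blowing up its $K$-Vertex Separator optimum far past $\OPT$, violating (ii); making it too ``stingy'' risks leaving $k$-paths that straddle several small components of $G' - T$, violating (i). Balancing these two demands via the structure of $P_k$-free graphs is the central new ingredient, and is precisely where the ``additional ideas from FPT algorithms and graph theory'' promised in the abstract must enter.
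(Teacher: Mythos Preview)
Your proposal correctly identifies the central obstacle but does not overcome it, and the paper's actual route is substantially different from what you outline.

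The paper does \emph{not} reduce to $K$-Vertex Separator on any modified graph $G'$. Instead it first solves the natural covering LP for $k$-Path Transversal (every $k$-path has total $x$-weight at least $1$), and declares a vertex \emph{red} if $x_v \geq 1/k$. Then every $k$-path contains a red vertex, and the number of red vertices is at most $k\cdot\LP$. The structural lemma is: after removing an optimal transversal $S^*$, one can further remove a set $S'$ with $|S'|\leq |R^*|/k \leq \LP$ so that every component has at most $k^3$ \emph{red} vertices. The proof repeatedly deletes a longest path from any component with more than $k^3$ reds, using the elementary fact that any two longest paths in a connected graph intersect (so deleting one strictly shortens the longest path), and charges the cost to the reds. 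This shows the optimum of \emph{$k^3$-Subset Vertex Separator} on $(G,R)$ is at most $2\,\OPT$, and the paper runs its bicriteria algorithm for this Subset variant (Theorem~\ref{thm:subset}) to get components with at most $2k^3$ red vertices at cost $O(\log k)\cdot\OPT$. The cleanup is then bounded-depth branching: in each component the optimum is at most $2k^3$ (remove all reds), so branching on a $k$-path to depth $2k^3$ costs $k^{2k^3}\cdot 2^{O(k)} n^{O(1)} = 2^{O(k^3\log k)} n^{O(1)}$.

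Relative to this, your lemma as stated is probably unattainable: condition (i) essentially forces $G'$ to contain $G$, and then (ii) fails for exactly the stars and caterpillars you mention. The paper sidesteps this by bounding not the component size but the per-component count of \emph{LP-heavy} vertices, and by introducing the Subset Vertex Separator problem to exploit that bound. The treedepth observation you cite does not yield such a bound. Also note that in the paper's solution the components after step 2 can have arbitrarily many vertices (only the red count is controlled), so the exact step cannot be exhaustive search over the component as you suggest; it must be the branching argument above, whose depth is bounded by the red count rather than the component size.
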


\section {Techniques}
\label{sec:techniques}
Our algorithms for \pv and \pe consist of the following three steps. 
We give a simple overview of our techniques for \pvv.

\paragraph{1. Spreading Metrics. }
{\em Spreading metrics} were introduced in Even et al.~\cite{ENRS00} and subsequently used for $\rho$-separator~\cite{ENRS99}.\footnote{The conference version of~\cite{ENRS00} precedes that of ~\cite{ENRS99}.}
They assign lengths to vertices such that any subset $S$ of vertices with $|S| > k$ that induce a single connected component are {\em spread apart}.

Given lengths $x_v$ to each vertex $v \in V$, define $d_{u, v}$ to be the length of the shortest path from $u$ to $v$, including the lengths of both $u$ and $v$ (so that $d_{u, u} = x_u$). Given a feasible solution $S \subseteq V$ for \pvv, let $x_v = 1$ if $v \in S$, and $x_v = 0$ if $v \notin S$. 
It is easy to see that two vertices $u$ and $v$ lie on the same component of $G|_{V \setminus S}$ if and only if $d_{u, v} = 0$. Otherwise, $d_{u, v} \geq 1$. 
Therefore, for every vertex $v$, the number of vertices that have distance strictly less than $1$ from $v$ must be at most $k$. 

Spreading metrics are a continuous relaxation of the above integer program. We relax each distance $x_v$ to have value in $[0, 1]$, and let $d_{u, v}$ still be the length of the shortest path from $u$ to $v$. Let $f_{u, v} = \max(1 - d_{u, v}, 0)$. In the integral solution, it indicates whether $d_{u, v} < 1$ or not. The constraint $\sum_{u} f_{v, u} \leq k$ for all $v \in V$ is a relaxation of the requirement that the number of vertices that have distance strictly less than $1$ from $v$ must be at most $k$. 

Even though this relaxation does not exactly capture the integer problem, one crucial property of this relaxation is that for every $v \in V$ and $\epsilon \in (0, \frac{1}{2})$, the number of vertices that have distance at most $\epsilon$ from $v$ can be at most $\frac{k}{1 - \epsilon}$. This can proved via a simple averaging argument. 

\paragraph{2. Low-Diameter Decomposition. } 
Before we introduce our rounding algorithm, we briefly discuss why the previous algorithms based on the same (or stronger) relaxation has the approximation ratio depending on $n$. 

The current best algorithm by Krauthgamer et al.~\cite{KNS09} further strengthened the above spreading metrics by requiring that they also form an $\ell_2^2$ metric, and transformed them to an $\ell_2$ metric. This black-box transformation of an $n$-points $\ell_2^2$ metric incurs distortion of $\Omega(\sqrt{\log n})$, so the approximation ratio must depend on $n$.

The older work of Even et al.~\cite{ENRS99} used the rounding algorithm of Garg et al.~\cite{GVY96} that iterative takes a ball of small radius from the graph. More specifically, they defined $\vol(v, r)$ to be the total sum of lengths in the ball of radius $r$ centered at $v$, and grow $r$ until the boundary-volume ratio becomes $O(\log (\frac{\vol(v, \frac{1}{2})}{\vol(v, 0)}))$. 
To make $\vol(v, 0)$ nonzero, a {\em seed value} of $\epsilon \cdot \OPT$  must be added to the the definition of $\vol(v, r)$. But when $k = O(1)$ so that the number of balls we need to remove from the graph is $\Omega(n)$, this incurs extra cost of $\Omega(\epsilon n \OPT)$, forcing $\epsilon$ to depend on $n$. 

We apply another standard technique for the {\em low-diameter decomposition} to our spreading metrics. In particular, our algorithm is similar to that of Carlinescu et al.~\cite{CKR05}, preceded by a simple rounding algorithm that removes every vertex with large $x_v$. 
One simple but crucial observation is that the performance of this algorithm only depends on the size of the ball around each vertex, which is exactly what spreading metrics is designed for!
Since the size of each ball of radius $\frac{1}{2}$ is at most $O(k)$, we can guarantee that we can delete at most $O(\log k) \cdot \OPT$ vertices so that each connected component has at most $O(k)$ vertices. 

When $k = O(1)$, to the best of our knowledge, this is a rare example where the number of partitions (i.e., the number of balls taken) is $\Omega(n)$ but the approximation ratio is much smaller than that. The original rounding algorithm of Carlinescu et al.~\cite{CKR05} is applied to {\em $0$-Extension} with $k$ terminals to achieve $O(\log k)$-approximation, where only $k$ balls are needed to be taken. The famous $O(\log k)$-approximation for Multicut with $k$ source-sink pairs~\cite{GVY96} also required only $k$ partitions. 

\paragraph{3. Cleanup.}
After running the bicriteria approximation algorithm to make sure that each connected component has size at most $O(k)$, for \pvv, we run the exhaustive search for each component to have the true approximation. This incurs the extra running time of $2^{O(k)} n$, but our hardness result implies that the superpolynomial dependence on $k$ may be necessary. 

For \pee, essentially the same bicriteria approximation algorithm works. After that, for each component, we use (a variant of) Racke's $O(\log n)$-true approximation algorithm for Min Bisection to each component to make sure that each component has at most $k$ vertices. 
The existence of {\em true approximation} for Min Bisection is a key difference between the vertex version and the edge version. Even $O(\sqrt{\log n})$-bicriteria approximation is known for the vertex version of Min Bisection~\cite{FHL08}, but our hardness result for the vertex version suggests that this algorithm is not likely to be applicable. 
While Min Bisection asks to partition the graph into two pieces while \pe may need to partition it into many pieces, we prove that as long as each connected component has size at most $\frac{3k}{2}$, a simple trick makes the two problems equivalent. 

\section {Algorithm for \pp}
\label{sec:decomposition}

\subsection{Spreading Metrics}
\label{subsec:sa}
Our relaxation is close to {\em spreading metrics} used for $\rho$-separator~\cite{ENRS99}.
While their relaxation involves an exponential number of constraints and is solved by the ellipsoid algorithm, we present a simpler relaxation where the total number of variables and constraints is polynomial. 
Our relaxation has the following variables.

\begin{itemize}
\item $x_v$ for $v \in V$: It indicates whether $v$ is removed or not. 
\item $d_{u, v}$ for $(u, v) \in V \times V$: 
Given $\{ x_v \}_{v \in V}$ as lengths on vertices, 
$d_{u, v}$ is supposed to be the minimum distance between $u$ and $v$. 
Let $\calP_{u, v}$ be the set of simple paths from $u$ to $v$, and given $P = (u_0 := u, u_1, \dots, u_p := v) \in \calP_{u, v}$, let $d(P) = x_{u_0} + \dots + x_{u_p}$. Formally, we want 
\[
d_{u, v} = \min_{P \in \calP_{u, v}} d(P).
\]
Note that $d_{u, v} = d_{v, u}$ and $d_{u, u} = x_u$. 

\item $f_{u, v}$ for all $(u, v) \in V \times V$: 
It indicates whether $u$ and $v$ belong to the same connected component or not.
\end{itemize}

Our LP is written as follows.
\begin{alignat}{3}
\mbox{minimize } \quad & \sum_{v \in V} x_{ v } && \notag \\
\mbox{subject to } \quad & d_{u, v} \leq \min_{P \in \calP_{u, v}} d(P) \qquad &&  \forall (u, v) \in V \times V \label{eq:dist} \\ 
& f_{u, v} \geq 1 - d_{u, v} && \forall (u, v) \in V \times V \notag \\
& f_{u, v} \geq 0 \qquad && \forall (u, v) \in V \times V \notag \\
& \sum_{u \in V} f_{v, u} \leq k && \forall v \in V \label{eq:radius} \\
& x_v \geq 0 && \forall v \in V 
\end{alignat}
\eqref{eq:dist} can be formally written as 
\begin{alignat*}{3}
& d_{u, u} = x_{u} \qquad && \forall u \in V \\
& d_{u, w} \leq d_{u, v} + x_{w} \qquad && \forall (u, v) \in V \times V, (v, w) \in E
\end{alignat*}
Therefore, the size of our LP is polynomial in $n$. It is easy to verify that our LP is a relaxation --- given a subset $S \subseteq V$ such that each connected component of $G|_{V \setminus S}$ has at most $k$ vertices, the following is a feasible solution with $\sum_v x_v = |S|$. 
\begin{itemize}
\item $x_v = 1$ if $v \in S$. $x_v = 0$ if $v \notin S$. 
\item $d_{u, v} = \min_{P \in \calP_{u, v}} d(P)$.
\item $f_{u, v} = 1$ if $u$ and $v$ are in the same component of $G|_{V \setminus S}$. Otherwise $f_{u, v} = 0$. 
\end{itemize}

Fix an optimal solution $\{x_v\}_v, \{ d_{u, v}, f_{u, v} \}_{u, v}$ for the above LP. 
It only ensures that $d_{u, v} \leq \min_{P \in \calP_{u, v}} d(P)$, so a priori $d_{u, v}$ can be strictly less than $\min_{P \in \calP_{u, v}} d(P)$. 
However, in that case increasing $d_{u, v}$ still maintains feasibility, since larger $d_{u,v}$ provides a looser lower bound of $f_{u, v}$ and lower $f_{u, v}$ helps to satisfy~\eqref{eq:radius}. 
For the subsequent sections, we assume that $d_{u, v} = \min_{P \in \calP_{u, v}} d(P)$, and $f_{u, v} = \max(1 - d_{u, v}, 0)$ for all $u, v$. 

\subsection{Low-diameter Decomposition}
\label{subsec:decomposition}
Given the above spreading metrics, we show how to decompose a graph such that each connected component has small number of vertices. 
Our algorithm is based on that of Calinescu et al.~\cite{CKR05}. One major difference is to bound the size of each {\em ball} by $O(k)$ in the analysis, and simple algorithmic steps to ensure this fact.

Fix $\epsilon \in (0, \frac{1}{2})$. 
Given an optimal solution $\{ x_{v} \}_{v \in V}$, the first step of the rounding algorithm is to remove every vertex $v \in V$ with $x_v \geq \epsilon$. 
This simple step is crucial in bounding the size of the ball around each vertex.
It removes at most $\frac{\OPT}{\epsilon}$ vertices.
Let $V' := V \setminus \{ v : x_v \geq \epsilon \}$, and $G' = (V', E')$ be the subgraph of $G$ induced by $V'$. 
Let $d'_{u, v}$ be the minimum distance between $u$ and $v$ in $G'$, and let $f'_{u, v} := \max(1 - d'_{u, v}, 0)$. Since removing vertices only increases distances, $d'_{u, v} \geq d_{u, v}$ and $f'_{u, v} \leq f_{u, v}$ for all $(u, v) \in V' \times V'$.

Our low-diameter decomposition
removes at most $O(\frac{\log k}{\epsilon}) \cdot \sum_{v \in V'} x_v$ vertices so that each resulting connected component has at most $\frac{k}{1 - 2 \epsilon}$ vertices. It proceeds as follows. 
\begin{itemize}
\item Pick $X \in [\epsilon / 2, \epsilon]$ uniformly at random. 
\item Choose a random permutation $\pi : V' \mapsto V'$ uniformly at random.
\item Consider the vertices one by one, in the order given by $\pi$. Let $w$ be the considered vertex (we consider every vertex whether it was previously disconnected, removed or not). 
\begin{itemize}
\item For each vertex $v \in V'$ with $d'_{w, v} - x_v \leq X \leq d'_{w, v}$, 
we remove $v$ when it was neither removed nor {\em disconnected} previously. 
\item The vertices in $\{ v : d'_{w, v} < X \}$ are now disconnected from the rest of the graph. Say these vertices are {\em disconnected}. 
\end{itemize}
\end{itemize}

For each vertex $w$, let $B(w) := \{ v \in V' : d'_{w, v} \leq 2 \epsilon \}$. 
A simple averaging argument bounds $|B(w)|$. 

\begin{lem}
For each vertex $w$, $|B(w)| \leq \frac{k}{1 - 2 \epsilon}$.
\label{lem:ball}
\end{lem}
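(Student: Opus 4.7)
The plan is to exploit the spreading-metric constraint \eqref{eq:radius} directly, using the fact that every vertex in $B(w)$ contributes a uniformly large amount to the sum $\sum_u f_{w,u}$.

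First I would fix a vertex $w$ and an arbitrary $v \in B(w)$. By definition of $B(w)$ we have $d'_{w,v} \leq 2\epsilon$. Since the instance provides $\epsilon \in (0, 1/2)$, this gives $f'_{w,v} = \max(1 - d'_{w,v}, 0) \geq 1 - 2\epsilon > 0$.

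Next I would lift this bound from the metric on $G'$ back to the metric on the original graph $G$, which is where the LP constraint lives. Removing the high-length vertices $\{v : x_v \geq \epsilon\}$ to form $G'$ can only lengthen shortest paths, so $d'_{u,v} \geq d_{u,v}$ and hence $f_{u,v} \geq f'_{u,v}$ for every $u,v \in V'$, as already noted in the text just before the lemma. In particular, for every $v \in B(w)$, $f_{w,v} \geq 1 - 2\epsilon$.

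Finally I would sum this pointwise bound over $B(w)$ and chain it with the LP constraint \eqref{eq:radius} applied at $w$:
\[
(1 - 2\epsilon)\,|B(w)| \;\leq\; \sum_{v \in B(w)} f_{w,v} \;\leq\; \sum_{v \in V} f_{w,v} \;\leq\; k,
\]
which rearranges to $|B(w)| \leq k/(1 - 2\epsilon)$, as claimed. There is no real obstacle here: the whole content of the argument is that the spreading-metric relaxation was tailored precisely so that small-radius balls carry bounded LP mass, and the preprocessing step (deleting vertices with $x_v \geq \epsilon$) was done exactly so that passing from $f$ to $f'$ is monotone in the direction we need.
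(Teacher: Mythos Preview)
Your proof is correct and is essentially identical to the paper's own argument: both show $f_{w,v}\geq 1-2\epsilon$ for every $v\in B(w)$ (via $d'_{w,v}\leq 2\epsilon$ and $f\geq f'$), then sum and invoke the LP constraint~\eqref{eq:radius}. The only cosmetic difference is that the paper phrases it as a proof by contradiction, while you write the inequality chain directly.
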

\begin{proof}
Assume towards contradiction that $|B(w)| > \frac{k}{1 - 2 \epsilon}$. 
For all $u \in B(w)$, 
\[
f_{w, u} \geq f'_{w, u} \geq 1 - d'_{w, u} \geq 1 - 2 \epsilon.
\]
Furthermore, even for $u \notin B(w)$, our LP ensures that $f_{w, u} \geq 0$. 
Therefore, \[
\sum_{u \in V} f_{w, u} \geq 
\sum_{u \in B(w)} f_{w, u} \geq (1 - 2 \epsilon) |B(w)| > k,
\]
contradicting~\eqref{eq:radius} of our LP. 
\end{proof}

Note that at the end of the algorithm, every vertex is removed or disconnected, since every $w \in V'$ becomes removed or disconnected after being considered. Moreover, each connected component is a subset of $\{ v : d'_{w, v} < X \}$ for some $w \in V'$ and $X \leq \epsilon$, which is a subset of $B(w)$. Therefore, each connected component has at most $\frac{k}{1 - 2\epsilon}$ vertices. 
We finally analyze the probability that a vertex $v$ is removed. 
\begin{lem}
The probability that $v \in V'$ is removed is at most $O(\frac{\log k}{\epsilon}) \cdot x_v$. 
\end{lem}
\begin{proof}
Fix a vertex $v \in V'$. 
When $w \in V'$ is considered, $v$ can be possibly removed only if 
\begin{align*}
& d'_{v, w} - x_v \leq \epsilon \\
\Rightarrow \,\, & d'_{v, w} \leq 2\epsilon \qquad \qquad (\mbox{since } x_v \leq \epsilon) \\
\Rightarrow \, \, & w \in B(v).
\end{align*}
Let $W = \{ w_1, \dots, w_p \}$ be such vertices such that $d'_{v, w_1} \leq \dots \leq d'_{v, w_p} \leq 2 \epsilon$. By Lemma~\ref{lem:ball}, $p \leq \frac{k}{1 - 2 \epsilon}$. 

Fix $i$ and consider the event that $v$ is removed when $w_i$ is considered. This happens only if $d'_{v, w_i} - x_v \leq X \leq d'_{v, w_i}$. 
For fixed such $X$, a crucial observation is that if $w_j$ with $j < i$ is considered before $w_i$, since $d'_{v, w_j} - x_v \leq X$, $v$ will be either removed or disconnected when $w_j$ is considered. In particular, $v$ will not be removed by $w_i$. Given these observations, the probability that $v$ is removed is bounded by 

\begin{align*}
\Pr[v \mbox{ is removed}] 
&= \sum_{i = 1}^p \Pr[v \mbox{ is removed when } w_i \mbox { is considered}] \\
&= \sum_{i = 1}^p \Pr[X \in [d'_{v, w_i} - x_v, d'_{v, w_i}] \mbox{ and $w_i$ comes before $w_1, \dots, w_{i - 1}$ in $\pi$}] \\
&\leq \sum_{i = 1}^p \frac{2x_v}{\epsilon i} = x_v \cdot O(\frac{\log p}{\epsilon}) = x_v \cdot O(\frac{\log k}{\epsilon}).
\end{align*}
\end{proof}

Therefore, the low-diameter decomposition removes at most $O(\frac{\log k}{\epsilon}) \cdot \sum_v x_v \leq O(\frac{\log k}{\epsilon}) \cdot \OPT$ vertices so that each resulting connected component has at most $\frac{k}{1 - 2 \epsilon}$ vertices. This gives a bicriteria approximation algorithm that runs in time $poly(n, k)$, proving Theorem~\ref{thm:main}.

\section{$k$-Path Transversal}
\label{sec:path}
Let $G = (V, E)$ and $k \in \NN$ be an instance of \ptt, 
where we want to find the smallest $S \subseteq V$ such that the length of the longest path in $G|_{V \setminus S}$ (denoted by $l(G|_{V \setminus S})$) is strictly less than $k$. 
Recall that the length here denotes the number of vertices in a path. 
Call a path {\em $l$-path} if it has $l$ vertices. 

Let $\calP_k$ be the set of all simple paths of length $k$. 
Our algorithm starts by solving the following naive LP. 
\begin{alignat}{3}
\mbox{minimize } \quad & \sum_{v \in V} x_{ v } && \notag \\
\mbox{subject to } \quad & \sum_{i = 1}^k x_{v_i} \geq 1 \qquad &&  \forall P = (v_1, \dots, v_k) \in \calP_k \label{eq:dist5} \\ 
& x \geq 0 && \forall v \in V \times V \notag
\end{alignat}
When $G$ is a clique with $n$ vertices, any feasible solution needs to remove at least $n - k + 1$ vertices while the above LP has the optimum at most $\frac{n}{k}$ by giving $\frac{1}{k}$ to every $x_v$. Therefore, it has an integrality gap close to $k$, but our algorithm bypasses this gap. 

\begin{lem}
The above LP can be solved in time $k^{O(k)} n^{O(1)}$. 
\end{lem}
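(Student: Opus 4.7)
The plan is to solve the LP by the ellipsoid method. The LP has $n$ variables but potentially $\Omega(n^k)$ constraints, so we need a separation oracle that, given a candidate $x \geq 0$, either certifies that $\sum_{i=1}^k x_{v_i} \geq 1$ for every simple $k$-path $(v_1, \dots, v_k) \in \calP_k$ or returns a violated one. Equivalently, the oracle must compute a simple $k$-path of minimum total vertex weight under weights $x_v$ and compare the weight to $1$. If the minimum weight is at least $1$, feasibility is certified; otherwise the minimizing path gives the violated inequality to pass to ellipsoid.

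To compute a minimum-weight simple $k$-path I would invoke the color coding technique of Alon, Yuster, and Zwick. Color each vertex independently and uniformly with one of $k$ colors; any fixed simple $k$-path becomes \emph{colorful} (uses all $k$ colors) with probability $k!/k^k \geq e^{-k}$. Conditioned on a coloring $c : V \to [k]$, the minimum-weight colorful $k$-path can be found by the standard subset DP: let $D[v, S]$ be the minimum weight of a simple path ending at $v$ whose vertex colors are exactly $S \subseteq [k]$, with base case $D[v, \{c(v)\}] = x_v$ and recurrence
\[
D[v, S] \;=\; x_v \;+\; \min_{\substack{u : \{u,v\} \in E \\ c(v) \in S}} D\bigl[u,\, S \setminus \{c(v)\}\bigr],
\]
evaluated for $|S| = 2, 3, \dots, k$ in $O(2^k \cdot |E| \cdot k)$ time. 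The oracle's answer is $\min_{v} D[v, [k]]$.

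To derandomize, I would replace the single random coloring with an $(n,k)$-perfect hash family $\mathcal{F}$ — a family of functions $V \to [k]$ of size $k^{O(k)} \log n$ such that for every $U \subseteq V$ with $|U| = k$, some $c \in \mathcal{F}$ is injective on $U$ (such families are constructible deterministically, e.g.\ by Naor--Schulman--Srinivasan). Running the DP for each $c \in \mathcal{F}$ and taking the minimum yields the exact minimum-weight simple $k$-path in time $k^{O(k)} n^{O(1)}$, because the optimal path is colorful under at least one $c \in \mathcal{F}$. Since ellipsoid makes $\mathrm{poly}(n)$ oracle calls with bit-complexity bounded by the encoding of $x$, the total running time is $k^{O(k)} n^{O(1)}$ as claimed.

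The only subtle point — and the main thing to verify carefully — is that the DP really outputs the minimum weight \emph{colorful} $k$-path (not merely detects existence), which is why the state records the set of colors used rather than just its size; everything else is routine. Note also that although the integrality gap of the LP is $\Theta(k)$ (as the clique example shows), the lemma only claims solvability, which is all that is needed for the subsequent rounding in the paper.
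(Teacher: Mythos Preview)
Your proposal is correct and follows essentially the same approach as the paper: solve the LP via the ellipsoid method with a separation oracle that computes a minimum-weight simple $k$-path using derandomized color coding with an $(n,k)$-perfect hash family. The only cosmetic difference is in the DP after coloring---you use the standard subset DP over $2^k$ color sets, whereas the paper enumerates all $k!$ orderings of the colors and runs a DAG DP for each; both variants yield the claimed $k^{O(k)} n^{O(1)}$ bound.
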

\begin{proof}
Given the current solution $\{ x_v \}_v$, we show how to check~\eqref{eq:dist5} in FPT time,
so that the LP can be solved efficiently via the ellipsoid algorithm. 
In particular, it suffices to compute $\min_{P = (v_1, \dots, v_k) \in \calP_k} \sum_{i=1}^k x_{v_i}$. 
Our algorithm is a simple variant of an algorithm for the $k$-Path problem. 
Our presentation follows Williams~\cite{Williams13}. 

Call a set of functions $F = \{ f_i \}_i$ with $f_i : [n] \mapsto [k]$ a {\em $k$-perfect hash family} 
if for any subset $S \subseteq [n]$ with $|S| = k$, there exists $f_i \in F$ such that $f_i(S) = [k]$. 
Naor et al.~\cite{NSS95} show that efficiently computable such $F$ exists with $|F| = 2^{O(k)} \log n$. 

For each $f_i \in F$ and a permutation $\pi \in \mathcal{S}_k$, we construct a directed acyclic graph (DAG) $D_{f_i, \pi}$, 
where for each edge $(u, v) \in E$, we add an arc from $u$ to $v$ if $\pi(f_i(u)) < \pi(f_i(v))$.
Finding the $k$-directed path that minimizes $\sum_{i=1}^k x_{v_i}$ in a DAG can be done via dynamic programming.
For $v \in V$ and $l \in [k]$, let $T[v, l]$ be the minimum weighted length of $l$-path that ends at $v$, and compute $T$
in topological order. 

Let $P^* = (v^*_1, \dots, v^*_k)$ be the path that minimizes $\min_{P = (v_1, \dots, v_k) \in \calP} \sum_{i=1}^k x_{v_i}$.
There must be $f_i \in F$ and $\pi \in \mathcal{S}_k$ such that 
$\pi(f_i(v^*_{i})) < \pi(f_i(v^*_{i + 1}))$ and arc $(v^*_i, v^*_{i+1})$ exists for $1 \leq i < k$. 
For this $f_i$ and $\pi$, the above dynamic programming algorithm for $D_{f_i, \pi}$ finds $P^*$. 

The dynamic programming takes $n^{O(1)}$ time, and we try $2^{O(k)} k! \log n= k^{O(k)} \log n$ different pairs $(f_i, \pi)$,
so the separation oracle runs in time $k^{O(k)} n^{O(1)}$. Our LP has only $n$ variables, so the total LP runs in time
$k^{O(k)} n^{O(1)}$.
\end{proof}

Solve the above LP to get an optimal solution $\{ x_v \}_{v \in V}$. Let $\LP := \sum_{v} x_v$. 
Call a vertex $v \in V$ {\em red} if $x_v \geq \frac{1}{k}$. Let $R$ be the set of red vertices. 
One simple but crucial observation is that every $k$-path must contain at least one red vertices, since all non-red vertices have $x_v < \frac{1}{k}$. 

Let $S^*$ be the optimal solution of $k$-Path Transversal. Let $V^* := V \setminus S^*, R^* := R \setminus S^*$ and $G^* = G|_{V \setminus S^*}$.
The result for $k$-Path Transversal requires the following lemma. 
\begin{lem}
There exists $S' \subseteq V^*$ with $|S'| \leq \frac{|R^*|}{k}$ vertices so that in the induced subgraph $G^*_{V^* \setminus S'}$, each connected component has at most $k^3$ red vertices. 
\end{lem}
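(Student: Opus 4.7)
The plan is to prove the lemma by setting up a ``subset'' variant of the spreading-metric LP of Section~\ref{subsec:sa} on $G^*$ and applying (a natural adaptation of) the low-diameter decomposition of Section~\ref{subsec:decomposition}. Specifically, I would introduce variables $y_v$ for $v\in V^*$, shortest-path distances $d_{u,v}$ in $G^*$ using $y$ as vertex weights, $f_{u,v}=\max(1-d_{u,v},0)$, and the subset spreading constraint
\[
\sum_{u \in R^*} f_{v,u} \;\le\; k^3 \qquad \forall v \in V^*.
\]
An essentially verbatim adaptation of Lemma~\ref{lem:ball} bounds the number of red vertices in each ball of radius $2\epsilon$ by $k^3/(1-2\epsilon)$, and running the ball-growing procedure of Section~\ref{subsec:decomposition} then returns a set $S'\subseteq V^*$ with $|S'|\le O(\log k/\epsilon)\cdot\sum_v y_v$ such that every connected component of $G^*|_{V^*\setminus S'}$ has at most $O(k^3)$ red vertices (the constant factor can be absorbed by choosing the LP target a constant smaller than $k^3$).

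The crux of the proof then reduces to exhibiting a feasible $\{y_v\}$ whose objective is at most $O(|R^*|/(k\log k))$, which combined with the rounding gives $|S'| = O(|R^*|/k)$. A first attempt is $y_v = k\cdot x_v$ (scaling the Path-Transversal LP by $k$), which trivially certifies feasibility because every red vertex then has $y_v \ge 1$, forcing $d_{v,u}\ge 1$ and hence $f_{v,u}=0$ for every $u\in R^*$; unfortunately its objective $k\LP$ is too large by a factor of $\mathrm{poly}(k)$. My plan to tighten this is to exploit the $P_k$-free structure of $G^*$: each connected component $C$ has graph-distance diameter at most $k-1$, so running BFS from any vertex of $C$ produces at most $k-1$ layers. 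I would set $y_v = 1$ on a carefully chosen sparse collection of BFS layers (selected by a pigeonhole weighted by the Path-Transversal LP values $x_v$ rather than by raw layer size), and $y_v = 0$ elsewhere, so that the total weight per component telescopes to $O(|R^*\cap C|/k)$ and summing over components yields the required objective bound.

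The hardest part is making this BFS-layered construction rigorously feasible: the subset spreading constraint must hold with RHS $k^3$ while the layer selection simultaneously balances the red count across the two sides of each cut and keeps the total $x$-weighted mass of the chosen layers small. The rough intuition is that with at most $k-1$ BFS layers and the diameter bound, any two red vertices that are far apart in the BFS layering are already at distance $\ge 1$ in the modified metric (so their $f_{v,u}$ is $0$ for free), and the sparse weighted layer selection only needs to control the remaining ``nearby'' red pairs. The delicate point is handling components where $|C|$ is much larger than $|R^*\cap C|$, since then the pigeonhole over layers weighted purely by size would pick a layer of total size much larger than $|R^*\cap C|/k$; the weighted version using $x_v$ is what should close this gap, because non-red vertices have $x_v < 1/k$ and thus contribute only cheaply to any $x$-weighted layer mass. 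If this explicit construction turns out to be too delicate, a backup is a Garg--Vazirani--Yannakakis-style ball-growing applied inside each component of $G^*$, where the $P_k$-free diameter bound limits the number of balls that must be grown and hence the rounding loss.
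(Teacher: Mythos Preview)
Your proposal is a plan, not a proof: the decisive step---constructing a feasible $\{y_v\}$ with objective $O(|R^*|/(k\log k))$---is left as a sketch with admitted difficulties, and neither the BFS-layer construction nor the GVY backup is carried out. Note also that your proposed $y$ is \emph{integral} ($y_v\in\{0,1\}$ on whole BFS layers), so if that construction worked it would already be the set $S'$; routing it through the subset LP and paying an $O(\log k)$ rounding loss is a detour that only makes the target harder. Concretely, the weighted pigeonhole you describe does not obviously control the right quantity: knowing that a layer has small total $x$-mass (because non-red vertices have $x_v<1/k$) tells you nothing about the \emph{cardinality} of that layer, which is what you pay when you set $y_v=1$ on it. For instance, in $K_{2,m}$ (two non-red vertices on one side, $m=k^4$ red leaves on the other; this is $P_k$-free for $k\ge 6$), the only BFS layer with small $x$-mass is a single non-red vertex, and choosing it alone is not enough to make the spreading constraint hold.

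The paper's proof is completely different and much shorter; it uses no LP at all. It repeatedly picks any connected component $C$ of $G^*$ with more than $k^3$ red vertices, removes an arbitrary longest path of $C$ (which has fewer than $k$ vertices since $G^*$ is $P_k$-free), and charges its cost uniformly to the red vertices of $C$, so each red vertex is charged less than $k/k^3=1/k^2$ per iteration. The key combinatorial fact is that any two longest paths in a connected graph must intersect; hence removing one longest path strictly decreases the longest-path length in every resulting component, so each red vertex is charged at most $k$ times. Summing, each red vertex pays less than $1/k$, giving $|S'|<|R^*|/k$. This is the missing idea: a direct charging argument driven by the ``longest paths intersect'' property, rather than any spreading-metric machinery.
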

\begin{proof}
We prove the lemma by the following (possibly exponential time) algorithm: 
For each connected component $C$ that has more than $k^3$ red vertices, take an arbitrary longest path, remove all vertices in it (i.e., add them to $S'$) 
and charge its cost to all red vertices in $C$ uniformly. Since the length of any longest path should not exceed $k$ and $C$ has more than $k^3$ red vertices, each red vertex in $C$ gets charged at most $\frac{1}{k^{2}}$ in each iteration. 

We argue that each vertex in $G^*$ is charged at most $k$ times. This is based on the following simple observation.
\begin{fact}
In a connected component $C$, any two longest paths should intersect.
\end{fact}
\begin{proof}
Let $P_1 = (v_1, \dots, v_p)$ and $P_2 = (u_1, \dots, u_p)$ be two vertex-disjoint longest paths in the same connected component. Since they are in the same component, there exist $i, j \in [k]$ and another path $P_3 = (v_i, w_1, \dots, w_{q}, u_j)$ such that $w_1, \dots, w_q$ are disjoint from $v$'s and $u$'s ($q$ may be $0$). By reversing the order of $P_1$ or $P_2$, we can assume that $i, j \geq \frac{p + 1}{2}$. Then $(v_1, \dots, v_i, w_1, \dots, w_q, u_j, \dots, u_1)$ is a path with length at least $p + 1$, contradicting the fact that $P_1$ and $P_2$ are longest paths. 
\end{proof}
Therefore, if we remove one longest path from $C$, whether the remaining graph is still connected or divided into several connected components, the length of the longest path in each resulting connected component should be strictly less than the length of the longest path in $C$. 
Therefore, each vertex in $G^*$ can be charged at most $k$ times, and the total amount of charge is $k \cdot \frac{1}{k^{2}} = \frac{1}{k}$. 
\end{proof}

Consider $S^* \cup S'$. Its size is at most $\OPT + \frac{|R^*|}{k} \leq \OPT + \LP \leq 2 \OPT$, since every red vertex has $x_v \geq \frac{1}{k}$, and each component of $G_{S^* \cup S'}$ has at most $k^3$ red vertices. 
We formally define the following generalization of \pvv.

\begin{itemize}
\item[] {\bf \psv}
\item[]{\bf Input}: An undirected graph $G = (V, E)$, a subset $R \subseteq V$ and $k \in \NN$. 
\item[] {\bf Output}: Subset $S \subseteq V$ such that in the subgraph induced on $V \setminus S$ (denoted by $G|_{V \setminus S}$), each connected component has at most $k$ vertices from $R$. 
\item[] {\bf Goal}: Minimize $|S|$. 
\end{itemize}

Even though it seems a nontrivial generalization of \pvv, the analogous bicriteria approximation algorithm also exists. It is proved in Section~\ref{sec:subset}.
\begin{thm}
For any $\epsilon \in (0, 1/2)$, there is a polynomial time $(\frac{1}{1 - 2 \epsilon}, O(\frac{\log k}{\epsilon}))$-bicriteria approximation algorithm for \psvv. 
\label{thm:subset}
\end{thm}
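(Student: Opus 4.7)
The plan is to adapt the two-step algorithm (spreading metric LP plus low-diameter decomposition) behind Theorem~\ref{thm:main} with two localized changes. First, I would replace the spreading constraint~\eqref{eq:radius} by its $R$-weighted counterpart
\[
\sum_{u \in R} f_{v,u} \;\leq\; k \qquad \forall v \in V,
\]
keeping the variables $\{x_v\}, \{d_{u,v}\}, \{f_{u,v}\}$ and the distance constraints~\eqref{eq:dist} unchanged. This is still a valid relaxation of \psvv: for a feasible $S$, the natural integral assignment $x_v = \mathbf{1}[v \in S]$ and $f_{u,v} = \mathbf{1}[u,v \text{ in the same component of } G|_{V \setminus S}]$ satisfies $\sum_{u \in R} f_{v,u} = |R \cap C(v)| \leq k$, where $C(v)$ denotes $v$'s component. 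The same polynomial-size reformulation used in Section~\ref{subsec:sa} applies verbatim, so the LP is solvable in polynomial time.

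Second, in the low-diameter decomposition I would draw the random permutation $\pi$ only over $R \cap V'$ and process only red centers, while still letting the annulus step remove \emph{any} $v \in V'$ in the current shell and the inside step disconnect \emph{all} vertices within distance $X$. The key structural observation is that every final connected component containing at least one red vertex lies inside $B(w) = \{u \in V' : d'_{w,u} \leq 2\epsilon\}$ for some red center $w$. Indeed, each $v \in R \cap V'$ is eventually either removed or disconnected: either an earlier red center $w$ has $d'_{w,v} < X$, which places $v$ inside a ball around $w$, or no such $w$ exists before $v$ itself is processed, at which point $v$ is removed (if $x_v \geq X$, so $v$ is in its own annulus) or starts its own component of radius $< X$ (if $x_v < X$). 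Later centers inside such a component can only shrink it, never merge it with the outside.

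The two quantitative inequalities then mirror Section~\ref{subsec:decomposition}. An averaging argument identical to Lemma~\ref{lem:ball}, using the relaxed $R$-constraint, gives $|B(w) \cap R| \leq k/(1-2\epsilon)$ for every $w \in V'$, which combined with the structural observation yields at most $k/(1-2\epsilon)$ red vertices per final component. For the charging bound, any $v \in V'$ (red or not) can be removed only while processing a red $w$ with $d'_{w,v} - x_v \leq X \leq d'_{w,v}$; since $x_v < \epsilon$ and $X \leq \epsilon$, this forces $d'_{w,v} \leq 2\epsilon$, so the candidate centers form a subset of $B(v) \cap R$ of size $p \leq k/(1-2\epsilon)$. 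The same telescoping estimate $\sum_{i=1}^p \frac{2 x_v}{\epsilon\, i} = O(\frac{\log k}{\epsilon}) \cdot x_v$ then bounds the removal probability of $v$. Summing over $v \in V'$ and adding the $\OPT/\epsilon$ cost from preprocessing the vertices with $x_v \geq \epsilon$ yields the claimed $(\frac{1}{1 - 2\epsilon}, O(\frac{\log k}{\epsilon}))$-bicriteria guarantee.

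The most delicate point I anticipate is the structural claim that every red vertex ends up inside a $B(w)$ for some red center $w$, since the decomposition interleaves removal, disconnection, and further processing in a single pass. The argument must pin down that once a red $v$ is disconnected (either by its own processing or by an earlier red center $w$), its surviving component is monotone: later red centers that happen to fall inside this component can only split it further or delete vertices from it, and can never enlarge it beyond the initial ball $\{u : d'_{w,u} < X\} \subseteq B(w)$, which is what keeps the $R$-cap intact through to the end of the algorithm.
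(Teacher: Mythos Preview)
Your proposal is correct and follows essentially the same approach as the paper's proof in Section~\ref{sec:subset}: the LP is modified only in constraint~\eqref{eq:radius3} to sum over $u\in R$, the permutation is drawn over $R'$ with the same annulus/ball steps, and the two lemmas (Lemma~\ref{lem:ball3} bounding $|B(w)\cap R|$ and the charging lemma) are proved exactly as you outline. Your treatment of the structural claim that each surviving component with a red vertex sits inside some red-centered ball is, if anything, slightly more explicit than the paper's.
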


For \ptt, run the above bicriteria approximation algorithm for \psv with $k \leftarrow k^3$ and $\epsilon \leftarrow \frac{1}{4}$. 
This returns a subset $S \subseteq V$ such that $|S| \leq O(\log k) \cdot \OPT$ and each connected component of $G_{V \setminus S}$ has at most $2k^3$ red vertices.

Now we solve for each connected component $C$. Since every $k$-path has to have at least one red vertex, removing every red vertex destroys every $k$-path. In particular, the optimal solution has at most $2k^3$ vertices in $C$. We run the following simple recursive algorithm.
\begin{itemize}
\item Find a $k$-path $P = (v_1, \dots, v_k)$ if exists. 
\begin{itemize}
\item Otherwise, we found a solution --- compare with the current best one and return. 
\end{itemize}
\item If the depth of the recursion is more than $2k^3$, return. 
\item For each $1 \leq i \leq k$, 
\begin{itemize}
\item Remove $v_i$ from the graph and recurse. 
\end{itemize}
\end{itemize}
Finding a path takes time $2^{O(k)} n^{O(1)}$. In each stage the algorithm makes $k$ branches, but the depth of the recursion is at most $2k^3$ and the algorithm is guaranteed to find the optimal solution. Therefore, it runs in time $2^{O(k)} n^{O(1)} \cdot k^{2k^3} = 2^{O(k^3 \log k)} n^{O(1)}$. This proves Theorem~\ref{thm:path}.

\paragraph{Acknowledgements}
We thank Anupam Gupta, Guru Guruganesh, Venkatesan Guruswami, Konstantin Makarychev, and Tselil Schramm for useful discussions. We are also grateful to an anonymous reviewer for FOCS 2016 to suggest a way to improve our algorithms. 

\bibliographystyle{alpha}
\bibliography{../mybib}

\newcommand{\etalchar}[1]{$^{#1}$}
\begin{thebibliography}{BAMSN15}

\bibitem[AR06]{AR06}
Konstantin Andreev and Harald Racke.
\newblock Balanced graph partitioning.
\newblock {\em Theory of Computing Systems}, 39(6):929--939, 2006.

\bibitem[ARV09]{ARV09}
Sanjeev Arora, Satish Rao, and Umesh Vazirani.
\newblock Expander flows, geometric embeddings and graph partitioning.
\newblock {\em Journal of ACM}, 56(2), April 2009.

\bibitem[BAMSN15]{BMN15}
Walid Ben-Ameur, Mohamed-Ahmed Mohamed-Sidi, and Jos{\'e} Neto.
\newblock The k-separator problem: polyhedra, complexity and approximation
  results.
\newblock {\em Journal of Combinatorial Optimization}, 29(1):276--307, 2015.

\bibitem[BCC{\etalchar{+}}10]{BCCFV10}
Aditya Bhaskara, Moses Charikar, Eden Chlamtac, Uriel Feige, and Aravindan
  Vijayaraghavan.
\newblock Detecting high log-densities: an $o (n^{1/4})$ approximation for
  densest k-subgraph.
\newblock In {\em Proceedings of the forty-second ACM symposium on Theory of
  computing}, pages 201--210. ACM, 2010.

\bibitem[BCV{\etalchar{+}}12]{BCVGZ12}
Aditya Bhaskara, Moses Charikar, Aravindan Vijayaraghavan, Venkatesan
  Guruswami, and Yuan Zhou.
\newblock Polynomial integrality gaps for strong sdp relaxations of densest
  k-subgraph.
\newblock In {\em Proceedings of the twenty-third annual ACM-SIAM symposium on
  Discrete Algorithms}, pages 388--405. SIAM, 2012.

\bibitem[BJK{\etalchar{+}}13]{BJKST13}
Boštjan Brešar, Marko Jakovac, Ján Katrenič, Gabriel Semanišin, and Andrej
  Taranenko.
\newblock On the vertex $k$-path cover.
\newblock {\em Discrete Applied Mathematics}, 161(13–14):1943 -- 1949, 2013.

\bibitem[BKKS11]{BKKS11}
Bo{\v{s}}tjan Bre{\v{s}}ar, Franti{\v{s}}ek Kardo{\v{s}}, J{\'a}n
  Katreni{\v{c}}, and Gabriel Semani{\v{s}}in.
\newblock Minimum k-path vertex cover.
\newblock {\em Discrete Applied Mathematics}, 159(12):1189--1195, 2011.

\bibitem[Cam15]{Camby15}
Eglantine Camby.
\newblock Connecting hitting sets and hitting paths in graphs.
\newblock {\em Doctoral Thesis}, 2015.

\bibitem[CHK13]{CHK13}
Rajesh Chitnis, MohammadTaghi Hajiaghayi, and Guy Kortsarz.
\newblock Fixed-parameter and approximation algorithms: A new look.
\newblock In {\em Parameterized and Exact Computation}, pages 110--122.
  Springer, 2013.

\bibitem[CKR05]{CKR05}
Gruia Calinescu, Howard Karloff, and Yuval Rabani.
\newblock Approximation algorithms for the 0-extension problem.
\newblock {\em SIAM Journal on Computing}, 34(2):358--372, 2005.

\bibitem[DDvH14]{DDvH14}
P{\aa}l~Gr{\o}n{\aa}s Drange, Markus~Sortland Dregi, and Pim van’t Hof.
\newblock On the computational complexity of vertex integrity and component
  order connectivity.
\newblock In {\em Algorithms and Computation}, pages 285--297. Springer, 2014.

\bibitem[DGKR03]{DGKR03}
I.~Dinur, V.~Guruswami, S.~Khot, and O.~Regev.
\newblock A new multilayered {PCP} and the hardness of hypergraph vertex cover.
\newblock In {\em Proceedings of the 35th annual ACM Symposium on Theory of
  Computing}, STOC '03, pages 595--601, 2003.

\bibitem[ENRS99]{ENRS99}
Guy Even, Joseph Naor, Satish Rao, and Baruch Schieber.
\newblock Fast approximate graph partitioning algorithms.
\newblock {\em SIAM Journal on Computing}, 28(6):2187--2214, 1999.

\bibitem[ENRS00]{ENRS00}
G.~Even, J.~Naor, S.~Rao, and B.~Schieber.
\newblock Divide-and-conquer approximation algorithms via spreading metrics.
\newblock {\em Journal of the ACM}, 47(4):585--616, 2000.

\bibitem[FHL08]{FHL08}
Uriel Feige, MohammadTaghi Hajiaghayi, and James~R Lee.
\newblock Improved approximation algorithms for minimum weight vertex
  separators.
\newblock {\em SIAM Journal on Computing}, 38(2):629--657, 2008.

\bibitem[GHI{\etalchar{+}}13]{GHIKLSS13}
Daniel Gross, Monika Heinig, Lakshmi Iswara, L~William Kazmierczak, Kristi
  Luttrell, John~T Saccoman, and Charles Suffel.
\newblock A survey of component order connectivity models of graph theoretic
  networks.
\newblock {\em SWEAS Trans. Math}, 12(9):895--910, 2013.

\bibitem[GL15]{GL15d}
Venkatesan Guruswami and Euiwoong Lee.
\newblock Inapproximability of {H}-transversal/packing.
\newblock {\em Approximation, Randomization, and Combinatorial Optimization.
  Algorithms and Techniques}, page 284, 2015.

\bibitem[GSS15]{GSS15}
Venkatesan Guruswami, Sushant Sachdeva, and Rishi Saket.
\newblock Inapproximability of minimum vertex cover on k-uniform k-partite
  hypergraphs.
\newblock {\em SIAM Journal on Discrete Mathematics}, 29(1):36--58, 2015.

\bibitem[GVY96]{GVY96}
Naveen Garg, Vijay~V Vazirani, and Mihalis Yannakakis.
\newblock Approximate max-flow min-(multi) cut theorems and their applications.
\newblock {\em SIAM Journal on Computing}, 25(2):235--251, 1996.

\bibitem[HJ06]{HJ06}
Mohammad~Taghi Hajiaghayi and Kamal Jain.
\newblock The prize-collecting generalized steiner tree problem via a new
  approach of primal-dual schema.
\newblock In {\em Proceedings of the seventeenth annual ACM-SIAM symposium on
  Discrete algorithm}, pages 631--640. Society for Industrial and Applied
  Mathematics, 2006.

\bibitem[Kat16]{Katrenic16}
J{\'a}n Katreni{\v{c}}.
\newblock A faster fpt algorithm for 3-path vertex cover.
\newblock {\em Information Processing Letters}, 116(4):273--278, 2016.

\bibitem[Kho06]{Khot06}
Subhash Khot.
\newblock Ruling out ptas for graph min-bisection, dense k-subgraph, and
  bipartite clique.
\newblock {\em SIAM Journal on Computing}, 36(4):1025--1071, 2006.

\bibitem[KNS09]{KNS09}
Robert Krauthgamer, Joseph~Seffi Naor, and Roy Schwartz.
\newblock Partitioning graphs into balanced components.
\newblock In {\em Proceedings of the twentieth Annual ACM-SIAM Symposium on
  Discrete Algorithms}, pages 942--949. Society for Industrial and Applied
  Mathematics, 2009.

\bibitem[KR08]{KR08}
Subhash Khot and Oded Regev.
\newblock Vertex cover might be hard to approximate to within $2-\epsilon$.
\newblock {\em Journal of Computer and System Sciences}, 74(3):335--349, 2008.

\bibitem[Lov75]{Lovasz75}
L{\'a}szl{\'o} Lov{\'a}sz.
\newblock On minimax theorems of combinatorics.
\newblock {\em Doctoral Thesis, Mathematiki Lapok}, 26:209--264, 1975.

\bibitem[LRV13]{LRV13}
Anne Louis, Prasad Raghavendra, and Santosh Vempala.
\newblock The complexity of approximating vertex expansion.
\newblock In {\em Foundations of Computer Science (FOCS), 2013 IEEE 54th Annual
  Symposium on}, pages 360--369. IEEE, 2013.

\bibitem[Mar08]{Marx08}
D{\'a}niel Marx.
\newblock Parameterized complexity and approximation algorithms.
\newblock {\em The Computer Journal}, 51(1):60--78, 2008.

\bibitem[NSS95]{NSS95}
Moni Naor, Leonard~J Schulman, and Aravind Srinivasan.
\newblock Splitters and near-optimal derandomization.
\newblock In {\em Foundations of Computer Science, 1995. Proceedings., 36th
  Annual Symposium on}, pages 182--191. IEEE, 1995.

\bibitem[ODF{\etalchar{+}}11]{ODFGW11}
Yury Orlovich, Alexandre Dolgui, Gerd Finke, Valery Gordon, and Frank Werner.
\newblock The complexity of dissociation set problems in graphs.
\newblock {\em Discrete Applied Mathematics}, 159(13):1352--1366, 2011.

\bibitem[PY82]{PY82}
Christos~H Papadimitriou and Mihalis Yannakakis.
\newblock The complexity of restricted spanning tree problems.
\newblock {\em Journal of the ACM (JACM)}, 29(2):285--309, 1982.

\bibitem[Rac08]{Racke08}
Harald Racke.
\newblock Optimal hierarchical decompositions for congestion minimization in
  networks.
\newblock In {\em Proceedings of the Fortieth Annual ACM Symposium on Theory of
  Computing}, STOC '08, pages 255--264, New York, NY, USA, 2008. ACM.

\bibitem[TZ11]{TZ11}
Jianhua Tu and Wenli Zhou.
\newblock A factor 2 approximation algorithm for the vertex cover p3 problem.
\newblock {\em Information Processing Letters}, 111(14):683--686, 2011.

\bibitem[Wil13]{Williams13}
Ryan Williams.
\newblock Stanford {CS}266: Parameterized algorithms and complexity.
\newblock {\em Lecture 4}, 2013.

\bibitem[Yan81]{Yannakakis81}
Mihalis Yannakakis.
\newblock Node-deletion problems on bipartite graphs.
\newblock {\em SIAM Journal on Computing}, 10(2):310--327, 1981.

\end{thebibliography}

\appendix

\section{Hardness of \pp} 
\label{sec:hardness}
In this section, we prove that an $f$-true approximation algorithm for \pp that runs in time $poly(n, k)$ will result in $2f^2$-approximation algorithm for the Densest $k$-Subgraph problem, proving Theorem~\ref{thm:hardness}. 
In particular, $O(\log k)$-true approximation for \pp in time $poly(n, k)$ will lead to $O(\log^2 n)$-approximation for Densest $k$-Subgraph. 

Given an undirected graph $G = (V, E)$ and an integer $k$, Densest $k$-Subgraph asks to find $S \subseteq V$ with $|S| = k$ to maximize the number of edges of $G|_S$. 
It is one of the notorious problems in approximation algorithms. 
The current best approximation algorithm achieves $\approx O(n^{1/4})$-approximation~\cite{BCCFV10}.
While only PTAS is ruled out assuming $\mathbf{NP} \not\subseteq \cap_{\epsilon > 0} \mathbf{BPTIME}(2^{n^{\epsilon}})$~\cite{Khot06}, 
there are strong gap instances for Sum-of-Squares hierarchies of convex relaxations ($n^{\Omega(1)}$ gap for $n^{\Omega(1)}$ rounds)~\cite{BCVGZ12}, so having a $polylog(n)$-approximation algorithm for Densest $k$-Subgraph seems unlikely or will lead to a breakthrough.
Therefore, it may be the case that achieving $O(\log k)$-approximation for \pp requires superpolynomial dependence on $k$ in the running time. 

Our reduction is close to that of Drange et al.~\cite{DDvH14} who reduced Clique to \pp to prove $\mathbf{W}[1]$-hardness. 
Formally, we introduce another problem called {\em Minimum $k$-Edge Coverage}. Given an undirected graph $G$ and an integer $k$, the problem asks to find the minimum number of vertices whose induced subgraph has at least $k$ edges. 
This problem can be thought as a {\em dual} of Densest $k$-Subgraph in a sense that given the same input graph, the optimum of Densest $a$-Subgraph is at least $b$ if and only if the optimum of Minimum $b$-Edge Coverage is at most $a$. 
Hajiaghayi and Jain~\cite{HJ06} proved the following theorem, relating their approximation ratios. 

\begin{thm}[\cite{HJ06}]
If there is a polynomial time $f$-approximation algorithm for Minimum $k$-Edge Coverage, then there is a polynomial time $2f^2$-approximation algorithm for Densest $k$-Subgraph. 
\end{thm}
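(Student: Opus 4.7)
The plan is to use the dual relationship between Minimum $k$-Edge Coverage and Densest $k$-Subgraph: if the densest $k$-vertex subgraph of $G$ induces $m^\ast$ edges, then the minimum number of vertices whose induced subgraph contains $m^\ast$ edges is at most $k$. So if we had an oracle for $m^\ast$, one call to the $f$-approximation for Minimum $m^\ast$-Edge Coverage would return a vertex set $S$ with $|S| \le fk$ and $e(S) \ge m^\ast$, and we would then need to extract a $k$-subset of $S$ that retains enough edges. Since $m^\ast \in \{0,1,\ldots,\binom{n}{2}\}$, we can eliminate the oracle by simply trying every candidate value $m$ (or binary-searching) and returning the best $k$-subset found over all runs.

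For the extraction step, the key observation is a random sampling argument. Pick a uniformly random $k$-subset $T \subseteq S$. The expected number of induced edges satisfies
\[
\E[e(T)] \;=\; \frac{\binom{k}{2}}{\binom{|S|}{2}}\,e(S) \;\ge\; \frac{k(k-1)}{fk(fk-1)} \cdot m^\ast \;\ge\; \frac{m^\ast}{2f^2},
\]
where the last inequality uses $k \ge 2$ (trivial cases can be handled separately). Hence at least one $k$-subset of $S$ attains $e(T) \ge m^\ast/(2f^2)$, which is exactly the target $2f^2$-approximation.

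To make the extraction constructive and polynomial-time, I would derandomize the sampling by the method of conditional expectations: order the vertices of $S$ and, one at a time, decide whether to place the vertex into a growing set $T$ or its complement, always choosing the option that maximizes the conditional expectation of the induced edge count given the partial choice. A standard calculation shows the conditional expectation can be maintained in polynomial time, and the final $T$ realizes at least the unconditional expectation $m^\ast/(2f^2)$. Combining everything, the algorithm runs the edge-coverage $f$-approximation once per candidate $m$, derandomizes to produce a $k$-subset, and returns the best one; the approximation ratio at the correct guess $m = m^\ast$ is $2f^2$.

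The main obstacle I anticipate is handling boundary regimes in the sampling calculation cleanly: when $f$ is close to $1$ or $k$ is very small the inequality $\binom{k}{2}/\binom{fk}{2} \ge 1/(2f^2)$ needs a careful case split, and one must verify that the $f$-approximation for Minimum $k$-Edge Coverage really does output a set of size at most $fk$ when the optimum is at most $k$ (this is the standard convention, but worth stating explicitly). Once these details are pinned down, the reduction is essentially a one-shot call plus a greedy derandomized pruning, with a polynomial-factor loss only from enumerating $m$.
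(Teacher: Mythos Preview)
The paper does not supply its own proof of this theorem; it is quoted verbatim as a result of Hajiaghayi and Jain~\cite{HJ06} and used as a black box in the hardness reduction. Your proposal is a faithful reconstruction of the standard argument from~\cite{HJ06}: guess the target edge count $m^\ast$, invoke the $f$-approximation for Minimum $m^\ast$-Edge Coverage to obtain $S$ with $|S|\le fk$ and $e(S)\ge m^\ast$, then extract a $k$-subset via the random-sampling bound $\E[e(T)] = \binom{k}{2}/\binom{|S|}{2}\cdot e(S)$ and derandomize. The inequality check you flag, $(k-1)/(f(fk-1)) \ge 1/(2f^2)$, reduces to $k \ge 2 - 1/f$, which holds for all $k\ge 2$ and $f\ge 1$; the $k=1$ case is trivial, so no delicate case split is actually needed.
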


We introduce a reduction from Minimum $k$-Edge Coverage to \ppp. Given an instance $G = (V, E)$ and $k$ for Minimum $k$-Edge Coverage, the instance of \pp $G' = (V', E')$ and $k'$ is created as follows. Let $n = |V|$, $m = |E|$, and $M = n + 1$. 
\begin{itemize}
\item $V' = V \cup \{ e_{i} : e \in E, i \in [M] \}$. Note that $|V'| = n + Mm$. 
\item $E' = \binom{V}{2} \cup \{ (u, e_i) : u \in V, e \in E, u \in e, i \in [M] \}$. Intuitively, the subgraph induced by $V \subseteq V'$ forms a clique, and for each $e = (u, v) \in E$ and $i \in [M]$, $e_i$ is connected to $u$ and $v$ in $G'$. 
\item $k' = |V'| - Mk$. 
\end{itemize}

\begin{lem}
Every instance of \pp produced by the above reduction has an optimal solution $S \subseteq V'$ such that indeed $S \subseteq V$. 
\label{lem:hardness}
\end{lem}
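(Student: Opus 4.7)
The plan is a simple exchange argument exploiting the choice $M = n + 1$. Let $S$ be any optimal solution to the constructed \pvi{k'} instance. Split $S$ into $T := S \cap V$ (removed ``clique vertices'') and a collection $\{a_e\}_{e \in E}$, where $a_e$ counts how many of the copies $e_1,\dots,e_M$ lie in $S$. I would like to argue that one can replace $S$ by $T$ without increasing the objective and without losing feasibility.

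First I would establish two structural reductions. (i) The trivial assignment $S = V$ is always feasible for $(G',k')$: after deleting $V$, every $e_i$ has both its neighbors removed and is thus an isolated vertex, so every component has size $1 \leq k'$ (assuming $m \geq k$, else Min $k$-Edge Coverage is infeasible and there is nothing to prove). Consequently the optimum value satisfies $|S| \leq n < M$. (ii) For any edge $e = (u,v)$ with $u,v \in T$, the copies $e_i$ become isolated regardless of whether we remove them, so we may assume $a_e = 0$ for such $e$ at no extra cost. Any other $e_i \notin S$ attaches to the surviving clique $V \setminus T$, hence joins one ``big'' component.

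The core of the argument is to analyze this big component. Let $\mathcal{E}_{live}(T)$ be the edges of $G$ with at least one endpoint outside $T$, so $|\mathcal{E}_{live}(T)| = m - |E(G[T])|$. After the two reductions, the big component has size
\[
(n - |T|) + \sum_{e \in \mathcal{E}_{live}(T)} (M - a_e),
\]
and feasibility $\leq k' = n + M(m-k)$ rearranges to $M\bigl(k - |E(G[T])|\bigr) \leq |T| + \sum_e a_e = |S|$. Combined with $|S| \leq n < M$ from step (i), this forces $|E(G[T])| \geq k$, i.e.\ $T$ already covers at least $k$ edges in $G$.

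Finally I would verify that $S' := T \subseteq V$ is itself feasible for $(G', k')$: its big component has size $(n - |T|) + M \cdot |\mathcal{E}_{live}(T)|$, and the feasibility condition again reduces to $M(k - |E(G[T])|) \leq |T|$, which holds because the left-hand side is non-positive. Since $|S'| = |T| \leq |S|$, the new solution is optimal and contained in $V$. The only slightly delicate step is keeping the bookkeeping of the ``big'' component clean when $V \setminus T$ is empty; in that degenerate case I would simply replace $S$ by $V$ directly, which costs no more than $n \leq |S|$ and lies in $V$. The $M = n+1$ blow-up is what makes the ``cheat'' of deleting some $e_i$ instead of a clique vertex strictly suboptimal, so no subtler exchange is needed.
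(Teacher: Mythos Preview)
Your argument is correct, but it is genuinely different from the paper's. The paper proves the lemma by a purely local exchange: starting from an optimal $S$, it takes any $e_i \in S$ with $e=(u,v)$ and, according to whether $0$, $1$, or $2$ of $u,v$ already lie in $S$, either swaps $e_i$ for one of $u,v$ or simply drops $e_i$; each move preserves feasibility and does not increase $|S|$, so iterating yields an optimal $S \subseteq V$. No global counting and no use of the bound $|S| < M$ is needed.

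Your route is a global counting argument: first bound the optimum by $n < M$ via the trivial solution $S=V$, then compute the size of the single giant component to force $|E(G[T])| \geq k$ for $T = S \cap V$, and finally check that $T$ alone is feasible. This is slightly longer but buys two things the paper's exchange does not: it shows the \emph{canonical} restriction $T = S \cap V$ already works (the paper's swaps may land on a different subset of $V$), and it simultaneously establishes the soundness direction of the reduction (any feasible $S \subseteq V$ satisfies $|E(G[S])| \geq k$), which the paper proves separately in the paragraph following the lemma. Conversely, the paper's local exchange is more robust in that it never appeals to the numerical choice $M = n+1$ or to the feasibility of $S=V$, and it sidesteps your degenerate case $V \setminus T = \emptyset$ entirely.
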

\begin{proof}
Take an optimal solution $S$ such that $G'_{V' \setminus S}$ has each connected component with at most $k$ vertices. 
Suppose $S$ contains $e_i$ for some $e = (u, v) \in E$ and $i \in [M]$. There are three cases.
\begin{itemize}
\item $u, v \notin S$: Since there is an edge $(u, v) \in E'$, $u$ and $v$ are in the same connected component in $G'|_{V' \setminus S}$. Removing $e_i$ from $S$ and adding $u$ to $S$ still results in an optimal solution. 
\item $u \in S$, $v \notin S$: Removing $e_i$ from $S$ and adding $u$ to $S$ decreases the size of the connected component of $u$ by $1$, and creates a new singleton component consisting $e_i$. It is still an optimal solution. 
\item $u, v \in S$: Removing $e_i$ from $S$ just creates a new singleton component consisting $e_i$. It is a strictly better solution. 
\end{itemize}
We can repeatedly apply one of these three operations until $S$ is an optimal solution contained in $V$. 
\end{proof}

When $S \subseteq V$, $G'|_{V' \setminus S}$ has the following connected components.
\begin{itemize}
\item One component $(V \setminus S) \cup \{ e_i : e = (u, v) \in E, \{ u, v \} \not\subseteq S, i \in [M] \}$. Call it the {\em giant component}. 
\item For each $e = (u, v) \in E$ with $u, v \in S$ and $i \in [M]$, a singleton component $\{ e_i \}$. Call them {\em singleton components}. 
\end{itemize}

Suppose that the instance of Minimum $k$-Edge Coverage admits a solution $T \subseteq V$ such that the induced subgraph $G|_T$ has $l \geq k$ edges. 
Let $S = T$. Since $|V \setminus S| = n - |T|$ and $|\{ (u, v) \in E : \{ u, v \} \not\subseteq T \}| = m - l$, 
In $G'|_{V' \setminus S}$, the giant component will have cardinality
\[
n - |T| + M(m - l) \leq n - |T| + M(m - k) \leq n + M(m - k) = |V'| - Mk = k'.
\]
 
On the other hand, suppose that the instance of \pp has a solution $S$. 
By Lemma~\ref{lem:hardness}, assume that $S \subseteq V$. 
Let $l$ be the number of edges in $G|_S$. 
The size of the giant component is at least $n - |S| + M(m - l) \geq M(m - l - 1) + 1$ since $M > n$. 
Since $S$ is a feasible solution of the \pvii{k'}, we must have 
\begin{align*}
& M(m - l - 1) + 1 \leq k' = M - mk \\
\Rightarrow \,\, & l \geq k. 
\end{align*}
Therefore, $S$ is also a solution to Minimum $k$-Edge Coverage. This proves that the above reduction is an approximation preserving reduction from Minimum $k$-Edge Coverage to \ppp, proving Theorem~\ref{thm:hardness}.

\section {Algorithm for \pe}
\label{appendix:edge}

We present an $O(\log k)$-true approximation algorithm for \pee, proving Theorem~\ref{thm:edge}. Except the cleanup step, the algorithm is almost identical to that of \pvv. 
%Indeed, for the most general version of $\rho$-separator where each edge has a capacity, each vertex has a weight, and the goal is remove edges with the least capacity to make each connected component's weight at most $\rho$ times the total weight, there is a reduction from the edge version to the vertex version~\cite{ENRS99}. While our algorithms can be easily extended to handle capacity, our $O(\log k)$-approximation depends on the assumption that vertices have the same weight. At the risk of being redundant, we present our algorithm for the edge version independently here.  

\subsection{Spreading Metrics}
\label{subsec:sa_edge}

Our relaxation for the edge version is very close to that of the vertex version.
It has the following variables.

\begin{itemize}
\item $x_e$ for $e \in E$: It indicates whether $e$ is removed or not. 
\item $d_{u, v}$ for $(u, v) \in V \times V$: 
Given $\{ x_e \}_{e \in E}$ as lengths on vertices, 
$d_{u, v}$ is supposed to be the minimum distance between $u$ and $v$. 
Let $\calP_{u, v}$ be the set of simple paths from $u$ to $v$, and given $P = (u_0 := u, u_1, \dots, u_p := v) \in \calP_{u, v}$, let $d(P) = x_{u_0, u_1} + \dots + x_{u_{p - 1}, u_p}$. Formally, we want 
\[
d_{u, v} = \min_{P \in \calP_{u, v}} d(P).
\]
Note that $d_{u, v} = d_{v, u}$ and $d_{u, u} = 0$. 

\item $f_{u, v}$ for all $(u, v) \in V \times V$: 
It indicates whether $u$ and $v$ belong to the same connected component or not.
\end{itemize}

Our LP is written as follows.
\begin{alignat}{3}
\mbox{minimize } \quad & \sum_{e \in E} x_{ e } && \notag \\
\mbox{subject to } \quad & d_{u, v} \leq \min_{P \in \calP_{u, v}} d(P) \qquad &&  \forall (u, v) \in V \times V \label{eq:dist_2} \\ 
& f_{u, v} \geq 1 - d_{u, v} && \forall (u, v) \in V \times V \notag \\
& f_{u, v} \geq 0 \qquad && \forall (u, v) \in V \times V \notag \\
& \sum_{u \in V} f_{v, u} \leq k && \forall v \in V \label{eq:radius_2} \\
& x_e \geq 0 && \forall e \in E
\end{alignat}
\eqref{eq:dist_2} can be formally written as
\begin{alignat*}{3}
& d_{u, u} = 0 \qquad && \forall u \in V \\
& d_{u, w} \leq d_{u, v} + x_{v, w} \qquad && \forall (u, v) \in V \times V, (v, w) \in E
\end{alignat*}
Therefore, the size of our LP is polynomial in $n$. It is easy to verify that our LP is a relaxation --- given a subset $S \subseteq E$ such that each connected component of $(V, E \setminus S)$ has at most $k$ vertices, the following is a feasible solution with $\sum_e x_e = |S|$. 
\begin{itemize}
\item $x_e = 1$ if $e \in S$. $x_e = 0$ if $v \notin S$. 
\item $d_{u, v} = \min_{P \in \calP_{u, v}} d(P)$.
\item $f_{u, v} = 1$ if $u$ and $v$ are in the same component of $(V, E \setminus S)$. Otherwise $f_{u, v} = 0$. 
\end{itemize}

Fix an optimal solution $\{x_e\}_e, \{ d_{u, v}, f_{u, v} \}_{u, v}$ for the above LP. 
It only ensures that $d_{u, v} \leq \min_{P \in \calP_{u, v}} d(P)$, so a priori $d_{u, v}$ can be strictly less than $\min_{P \in \calP_{u, v}} d(P)$. 
However, in that case increasing $d_{u, v}$ still maintains feasibility, since larger $d_{u,v}$ provides a looser lower bound of $f_{u, v}$. 
For the subsequent sections, we assume that $d_{u, v} = \min_{P \in \calP_{u, v}} d(P)$ for all $u, v$. 

\subsection{Low-diameter Decomposition}
\label{subsec:decomposition}
Fix $\epsilon \in (0, \frac{1}{2}]$. Given an optimal solution $\{ x_{e} \}_{e \in E}$ and $\{ d_{u, v} \}_{u, v \in V \times V }$ to the above LP, 
our low-diameter decomposition
removes at most $O(\frac{\log k}{\epsilon}) \cdot \sum_e x_e$ edges so that each resulting connected component has at most $\frac{k}{1 - \epsilon}$ vertices. It proceeds as follows. 
\begin{itemize}
\item Pick $X \in [\epsilon / 2, \epsilon]$ uniformly at random. 
\item Choose a random permutation $\pi : V \mapsto V$ uniformly at random.
\item Consider the vertices one by one, in the order given by $\pi$. Let $w$ be the considered vertex (we consider every vertex whether it was previously disconnected or not). 
\begin{itemize}
\item Let $W \leftarrow \emptyset$. 
\item For each vertex $v \in V$ with $d_{w, v} \leq X$, if it is not disconnected yet, add it to $W$. 
\item {\em Disconnect} $W$ from the rest of the graph (i.e., remove every edge that has exactly one endpoint in $W$).
\end{itemize}
\end{itemize}

For each vertex $w$, let $B(w) := \{ v : d_{w, v} \leq \epsilon \}$. 
A simple averaging argument bounds $|B(w)|$. 

\begin{lem}
For each vertex $w$, $|B(w)| \leq \frac{k}{1 - \epsilon}$.
\label{lem:ball_2}
\end{lem}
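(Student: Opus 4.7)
The plan is to imitate the proof of Lemma~\ref{lem:ball} essentially verbatim, with the simplification that in the edge version we do not need to first restrict to a subgraph $V'$ (there is no rounding step that removes vertices with large $x_v$, since here $x_e$ sits on edges). The only LP ingredients needed are the triangle-style constraint \eqref{eq:dist_2}, the definition $f_{u,v} \geq 1 - d_{u,v}$, nonnegativity of $f_{u,v}$, and the spreading constraint~\eqref{eq:radius_2}.

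First I would suppose for contradiction that $|B(w)| > \frac{k}{1-\epsilon}$ for some $w \in V$. Then for every $u \in B(w)$ we have $d_{w,u} \leq \epsilon$, so the LP constraint $f_{w,u} \geq 1 - d_{w,u}$ gives $f_{w,u} \geq 1 - \epsilon$. For the remaining vertices $u \notin B(w)$, we still have the nonnegativity bound $f_{w,u} \geq 0$.

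Summing these bounds,
\[
\sum_{u \in V} f_{w,u} \;\geq\; \sum_{u \in B(w)} f_{w,u} \;\geq\; (1-\epsilon)\,|B(w)| \;>\; k,
\]
which contradicts \eqref{eq:radius_2}. Hence $|B(w)| \leq \frac{k}{1-\epsilon}$, as claimed.

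There is no real obstacle here: the argument is a one-line averaging that uses only the two LP constraints involving $f$ and the definition of $B(w)$. The only minor point worth emphasizing is that, unlike in the vertex version, the constant in the radius ($\epsilon$ rather than $2\epsilon$) matches the constant $1-\epsilon$ in the denominator, which is why the algorithm for the edge version needs to grow balls of radius only up to $\epsilon$ rather than $2\epsilon$.
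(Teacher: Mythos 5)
Your proof is correct and follows the paper's argument essentially verbatim: both use the averaging argument that summing $f_{w,u} \geq 1 - d_{w,u} \geq 1-\epsilon$ over $u \in B(w)$ (with nonnegativity elsewhere) contradicts the spreading constraint~\eqref{eq:radius_2}. Your closing observation about why the ball radius is $\epsilon$ rather than $2\epsilon$ (no pre-rounding of large $x_e$, since edge lengths need not obey $x_e \leq \epsilon$ to bound $d_{u,u}$) is also accurate and matches the structure of the edge-version algorithm.
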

\begin{proof}
Assume towards contradiction that $|B(w)| > \frac{k}{1 - \epsilon}$. 
For all $u \in B(w)$, $f_{w, u} \geq 1 - d_{w, u} \geq 1 - \epsilon$. 
Furthermore, even for $u \notin B(w)$, our LP ensures that $f_{w, u} \geq 0$. 
Therefore, \[
\sum_{u \in V} f_{w, u} \geq 
\sum_{u \in B(w)} f_{w, u} \geq (1 - \epsilon) |B(w)| > k,
\]
contradicting~\eqref{eq:radius_2} of our LP. 
\end{proof}

Note that at the end of the algorithm, every vertex is disconnected, since every $w \in V$ becomes disconnected after being considered. Moreover, each connected component is a subset of $\{ v : d_{w, v} \leq X \}$ for some $w \in V$ and $X \leq \epsilon$, which is a subset of $B(w)$. Therefore, each connected component has at most $\frac{k}{1 - \epsilon}$ vertices. 
We finally analyze the probability that an edge $e$ is removed. 
\begin{lem}
The probability that $e \in E$ is removed is at most $O(\frac{\log k}{\epsilon}) \cdot x_e$. 
\end{lem}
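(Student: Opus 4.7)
The plan is to mirror the proof of the analogous vertex-version lemma in Section~\ref{subsec:decomposition}, with the triangle inequality playing the role previously played by bounding $x_v$. Fix an edge $e = (u,v) \in E$ with LP weight $x_e$. I will first observe that for a vertex $w$ to cut $e$ when it is processed, exactly one of $u, v$ must sit inside the disconnected ball $\{v' : d_{w,v'} \leq X\}$, which forces $X$ to lie between $d_{w,u}$ and $d_{w,v}$. By the triangle inequality these two distances differ by at most $x_e$, so the admissible window for $X$ has width at most $x_e$, exactly the replacement of the ``$x_v$-wide window'' used in the vertex proof.

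Next I identify the candidate cutters of $e$. Any such $w$ needs $\min(d_{w,u}, d_{w,v}) \leq X \leq \epsilon$, so it lies in $B(u) \cup B(v)$; by Lemma~\ref{lem:ball_2} this set has at most $p := 2k/(1-\epsilon) = O(k)$ elements. I will list them as $w_1, \dots, w_p$ in non-decreasing order of $\min(d_{w_i, u}, d_{w_i, v})$, exactly parallel to the ordering used in the vertex case.

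The crux of the argument, which I expect to be the main thing to get right, is the claim that if $w_i$ actually cuts $e$ then $w_i$ must come before each of $w_1, \dots, w_{i-1}$ in $\pi$. I will establish this by a small case analysis: for any earlier $w_j$ in $\pi$, the ordering gives $X \geq \min(d_{w_i, u}, d_{w_i, v}) \geq \min(d_{w_j, u}, d_{w_j, v})$, so at least one endpoint of $e$ is already inside $B_X(w_j)$ when $w_j$ is processed. That endpoint is either (a) alone in $B_X(w_j)$, in which case $w_j$ itself has already cut $e$, or (b) joined by the other endpoint, in which case both $u$ and $v$ enter the same disconnected component together and $e$ is thereafter sealed inside a finished component. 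In either case $w_i$ cannot be the cutter.

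Finally, using independence of $X$ and $\pi$, I get $\Pr[w_i \text{ cuts } e] \leq (2x_e/\epsilon)\cdot(1/i)$, where the first factor bounds the probability that $X$ lands in an interval of length at most $x_e$ under the uniform density $2/\epsilon$ on $[\epsilon/2, \epsilon]$, and the second is the probability that $w_i$ is first among $w_1, \dots, w_i$ under a uniformly random $\pi$. A union bound over $i$ and a harmonic-sum estimate will yield the desired $O(\log k / \epsilon) \cdot x_e$ bound. The one cosmetic caveat is the regime $x_e > \epsilon/2$, where the first factor saturates at $1$; there $\Pr[e \text{ cut}] \leq H_p = O(\log k) \leq (2x_e/\epsilon)\cdot O(\log k)$ still satisfies the claimed bound, so the estimate is uniform in $x_e$.
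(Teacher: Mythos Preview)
Your proposal is correct and follows essentially the same argument as the paper: the paper likewise defines $\dn_{w,e}=\min(d_{w,u},d_{w,v})$ and $\df_{w,e}=\max(d_{w,u},d_{w,v})$, restricts the candidate cutters to $B(u)\cup B(v)$ of size at most $2k/(1-\epsilon)$, orders them by $\dn$, uses the ``first among $w_1,\dots,w_i$'' observation, and bounds $\Pr[X\in[\dn_{w_i,e},\df_{w_i,e}]]\le 2x_e/\epsilon$ (implicitly via the triangle inequality you make explicit). Your extra remark about the regime $x_e>\epsilon/2$ is harmless since $2x_e/\epsilon>1$ there, so the paper's inequality holds trivially without a separate case.
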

\begin{proof}
Fix an edge $e = (u, v) \in E$.
For a vertex $v \in W$, let $\dn_{w, e} = \min(d_{w, u}, d_{w, v})$ and $\df_{w, e} = \max(d_{w, u}, d_{w, v})$.
When $w \in V$ is considered, $e$ can be possibly removed only if $\dn_{w, e} \leq \epsilon \Rightarrow w \in B(v) \cup B(u)$. Let $W = \{ w_1, \dots, w_p \}$ be such vertices such that $\dn_{w_1, e} \leq \dots \leq \dn_{w_p, e} \leq \epsilon$. By Lemma~\ref{lem:ball_2}, $p \leq 2 \cdot \frac{k}{1 - \epsilon}$. 

Fix $i$ and consider the event that $e$ is removed when $w_i$ is considered. This happens only if $\dn_{w_i, e} \leq X \leq \df_{w_i, e}$. 
For fixed such $X$, a crucial observation is that if $w_j$ with $j < i$ is considered before $w_i$, since $\dn_{w_j, e} \leq X$, $e$ will be either removed (exactly one of $u$ and $v$ is disconnected) or disconnected (both $u$ and $v$ are disconnected) when $w_j$ is considered. In particular, $e$ will not be removed by $w_i$. Given these observations, the probability that $e$ is removed is bounded by 

\begin{align*}
\Pr[e \mbox{ is removed}] 
&= \sum_{i = 1}^p \Pr[e \mbox{ is removed when } w_i \mbox { is considered}] \\
&= \sum_{i = 1}^p \Pr[X \in [\dn_{v, w_i}, \df_{v, w_i}] \mbox{ and $w_i$ comes before $w_1, \dots, w_{i - 1}$ in $\pi$}] \\
&\leq \sum_{i = 1}^p \frac{2x_e}{\epsilon i} = x_e \cdot O(\frac{\log p}{\epsilon}) = x_v \cdot O(\frac{\log k}{\epsilon}).
\end{align*}
\end{proof}

Therefore, the low-diameter decomposition removes at most $O(\frac{\log k}{\epsilon}) \cdot \sum_v x_v \leq O(\frac{\log k}{\epsilon}) \cdot \OPT$ edges so that each resulting connected component has at most $\frac{k}{1 - \epsilon}$ vertices. 

To get true approximation, we use the algorithm for {\em Balanced $b$-Cut}. 
For an undirected graph $G = (V, E)$ with $n$ vertices and a real $b \in (0, 1/2]$, the Balanced $b$-Cut problem asks to find a subset $S \subseteq V$ with $bn \leq |S| \leq (1 - b)n$ such that the number of edges that have exactly one endpoint in $S$ is minimized. 
Racke~\cite{Racke08} gave an $O(\log n)$-true approximation algorithm for Balanced $b$-cut.\footnote{His algorithm is originally stated for {\em Min Bisection}, the special case with $b = \frac{1}{2}$. For any $c \in [0, 1 - 2b]$, adding a disjoint clique with $cn$ vertices and infinite-weight edges (his algorithm works in weighted version), forces the Minimum Bisection algorithm to output a cut in the original graph where the smaller side contains exactly $\frac{(1 - c)n}{2} \in [bn, \frac{n}{2}]$ vertices. Trying every value of $c \in [0, 1 - 2b]$ that makes $cn$ an integer and taking the best cut gives the desired $O(\log n)$-true approximation for Balanced $b$-Cut. The author thanks to Anupam Gupta for this idea. 
} 

We set $\epsilon = \frac{1}{3}$ such that each connected component after the low-diameter decomposition, each connected component has at most $\frac{3k}{2}$ vertices.
Fix a component of size $k'$. If $k' \leq k$, we are done. Otherwise, we use the $O(\log k') = O(\log k)$-approximation algorithm for Balanced $b$-Cut within the component.
Usually \pe (requires many connected components) and Balanced $b$-Cut (requires $2$ connected components) behave very differently, but given $k' \leq \frac{3k}{2}$, we show that they are equivalent.
\begin{lem}
In a graph $G = (V, E)$ with at most $k' \in (k, \frac{2}{3}k]$ vertices, the optimum solution of \pe and $b$-Balanced Cut with $b = \frac{k' - k}{k'}$ are the same.
\end{lem}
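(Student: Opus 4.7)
The plan is to prove equality by showing both inequalities $\mathsf{OPT}(\text{$k$-Edge Separator}) \le \mathsf{OPT}(b\text{-Balanced Cut})$ and vice versa. First I would unpack what $b$-Balanced Cut means here: with $n = k'$ and $b = (k'-k)/k'$, a feasible set $S \subseteq V$ must satisfy $k'-k \le |S| \le k$, equivalently both $|S|$ and $|V\setminus S|$ lie in $[k'-k, k]$. This interval is nonempty because $k' \le 3k/2$ gives $k'-k \le k/2 \le k$.

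For the direction $\mathsf{OPT}(\text{$k$-Edge Separator}) \le \mathsf{OPT}(b\text{-Balanced Cut})$, any feasible cut $S$ splits $V$ into two sides each of size at most $k$. Removing the cut edges leaves a graph in which every connected component is contained in $S$ or in $V\setminus S$, hence has at most $k$ vertices. Thus the cut edges form a feasible \pe solution of the same cost.

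For the reverse direction, take an optimal $T \subseteq E$ for \pe, producing components $C_1,\dots,C_t$ with $|C_i| \le k$ and $\sum_i |C_i| = k'$. The main step is to partition the $C_i$ into two groups $A,B$ with $|A|,|B| \in [k'-k, k]$; once this is done, the edges of $G$ between the $A$-side and the $B$-side form a subset of $T$ (no edge of $E\setminus T$ connects vertices in different components), so the cost of the resulting balanced cut is at most $|T|$. I would split into two cases. If some component has $|C_i| \ge k'-k$, place it alone in $A$; then $|A| = |C_i| \in [k'-k, k]$ and $|B| = k' - |C_i| \in [k'-k, k]$ as well. Otherwise every component has $|C_i| < k'-k \le k/2$, and I greedily add components to $A$ one at a time until $|A| \ge k'-k$; just before the final addition $|A| < k'-k$, so after the addition $|A| < 2(k'-k) \le k$, and then $|B| = k' - |A|$ also lies in $[k'-k, k]$.

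The only real obstacle is the bin-packing style argument in the second case, which is where the hypothesis $k' \le 3k/2$ is used crucially: the bound $k'-k \le k/2$ is exactly what guarantees that overshooting the lower threshold $k'-k$ by a single component keeps $|A|$ below $k$. Combining the two inequalities completes the proof.
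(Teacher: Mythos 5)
Your proof is correct, and the easy direction matches the paper's. For the harder direction (turning a \pe solution into a balanced cut), you take a genuinely different route. The paper orders the components by size and repeatedly merges the two smallest: when $l \ge 3$ pieces remain, the two smallest have combined size at most $\frac{2}{l}k' \le \frac{2}{3}k' \le k$, so the invariant ``every piece has size $\le k$'' is preserved, and iterating down to $l = 2$ automatically lands in $[k'-k, k]$ on both sides. You instead do a one-shot two-case partition: if some component alone has size at least $k'-k$ it can serve as one side, and otherwise a greedy fill cannot overshoot $k$ because every component is smaller than $k'-k \le k/2$. Both arguments use the hypothesis $k' \le \frac{3}{2}k$ in essentially the same place (bounding the overshoot by $k$), but yours is a direct bin-packing argument while the paper's is an inductive merging argument with a cleaner invariant. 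Each is elementary; the paper's has the minor virtue of not needing a case split, while yours makes the role of the bound $k'-k \le k/2$ more transparent. One small remark: you note the feasibility interval $[k'-k,k]$ is nonempty because $k'-k \le k/2$, but the weaker $k' \le 2k$ already suffices for nonemptiness; the stronger bound is only needed for the overshoot estimate.

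(As you implicitly noticed, the lemma statement's ``$k' \in (k, \frac{2}{3}k]$'' is a typo for ``$k' \in (k, \frac{3}{2}k]$''; both your proof and the paper's use $k' \le \frac{3}{2}k$.)
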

\begin{proof}
Any cut $(S, V \setminus S)$ feasible for $b$-Balanced Cut ensures that $\max(|S|, |V \setminus S|)$ is at most $(1 - b)k' = k$, so it is feasible for \pee. 

For the other direction, given a feasible solution of \pe where $V$ is partitioned into $S_1, \dots, S_l$ (assume $k \geq |S_1| \geq \dots \geq |S_l|$), if $l = 2$, $(S_1, S_2)$ is a feasible solution for $b$-Balanced Cut and we are done. If $l \geq 3$, merge $S_{l - 1}, S_l$ into one set (one $S_i$ may contain multiple connected components). This reduces $l$ by $1$, and since $|S_{l - 1}| + |S_l| \leq \frac{2}{l} \cdot k' \leq \frac{2}{3}k' \leq k$, maintains the invariant that $|S_i| \leq k$ for all $i$. Iterating until $l = 2$ gives a feasible solution for $b$-Balanced Cut with the same number of edges cut. 
\end{proof}

Therefore, running the approximation algorithm $b$-Balanced Cut for each component guarantees that we remove $O(\log k) \cdot \OPT$ additional edges and each component has at most $k$ vertices. This proves Theorem~\ref{thm:edge}.

\section{\psv}
\label{sec:subset}

Given a graph $G = (V, E)$ and $k \in \mathbb{N}$. There is a subset $R \subseteq V$ of {\em red} vertices. Our relaxation has the following variables.

\begin{itemize}
\item $x_v$ for $v \in V$: It indicates whether $v$ is removed or not. 
\item $d_{u, v}$ for $(u, v) \in V \times V$: 
Given $\{ x_v \}_{v \in V}$ as lengths on vertices, 
$d_{u, v}$ is supposed to be the minimum distance between $u$ and $v$. 
Let $\calP_{u, v}$ be the set of simple paths from $u$ to $v$, and given $P = (u_0 := u, u_1, \dots, u_p := v) \in \calP_{u, v}$, let $d(P) = x_{u_0} + \dots + x_{u_p}$. Formally, we want 
\[
d_{u, v} = \min_{P \in \calP_{u, v}} d(P).
\]
Note that $d_{u, v} = d_{v, u}$ and $d_{u, u} = x_u$. 

\item $f_{u, v}$ for all $(u, v) \in V \times V$: 
It indicates whether $u$ and $v$ belong to the same connected component or not.
\end{itemize}

Our LP is written as follows.
\begin{alignat}{3}
\mbox{minimize } \quad & \sum_{v \in V} x_{ v } && \notag \\
\mbox{subject to } \quad & d_{u, v} \leq \min_{P \in \calP_{u, v}} d(P) \qquad &&  \forall (u, v) \in V \times V \label{eq:dist3} \\ 
& f_{u, v} \geq 1 - d_{u, v} && \forall (u, v) \in V \times V \notag \\
& f_{u, v} \geq 0 \qquad && \forall (u, v) \in V \times V \notag \\
& \sum_{u \in R} f_{v, u} \leq k && \forall v \in V \label{eq:radius3}
\end{alignat}
\eqref{eq:dist3} can be formally written as 
\begin{alignat*}{3}
& d_{u, u} = x_{u} \qquad && \forall u \in V \\
& d_{u, w} \leq d_{u, v} + x_{w} \qquad && \forall (u, v) \in V \times V, (v, w) \in E
\end{alignat*}
The only change is that in~\eqref{eq:radius3}, $f_{v, u}$ is summed over $u \in R$ instead of $u \in V$. It is clearly a relaxation. 

Fix an optimal solution $\{x_v\}_v, \{ d_{u, v}, f_{u, v} \}_{u, v}$ for the above LP. 
As usual, assume without loss of generality that $d_{u, v} = \min_{P \in \calP_{u, v}} d(P)$, and $f_{u, v} = \max(1 - d_{u, v}, 0)$ for all $u, v$. 

\subsection{Low-diameter Decomposition}
\label{subsec:decomposition}
Fix $\epsilon \in (0, \frac{1}{2})$. 
Given an optimal solution $\{ x_{v} \}_{v \in V}$, the first step of the rounding algorithm is to remove every vertex $v \in V$ with $x_v \geq \epsilon$. It removes at most $\frac{\OPT}{\epsilon}$ vertices.

Let $V' := V \setminus \{ v : x_v \geq \epsilon \}$, and $G' = (V', E')$ be the subgraph of $G$ induced by $V'$. Let $R' = V' \cap R$. 
Let $d'_{u, v}$ be the minimum distance between $u$ and $v$ in $G'$, and let $f'_{u, v} := \max(1 - d'_{u, v}, 0)$. Since removing vertices only increases distances, $d'_{u, v} \geq d_{u, v}$ and $f'_{u, v} \leq f_{u, v}$ for all $(u, v) \in V' \times V'$.

Our low-diameter decomposition
removes at most $O(\frac{\log k}{\epsilon}) \cdot \sum_{v \in V'} x_v$ vertices so that each resulting connected component has at most $\frac{k}{1 - 2 \epsilon}$  red vertices. It proceeds as follows. 
\begin{itemize}
\item Pick $X \in [\epsilon / 2, \epsilon]$ uniformly at random. 
\item Choose a random permutation $\pi : R' \mapsto R'$ uniformly at random.
\item Consider the red vertices one by one, in the order given by $\pi$. Let $w$ be the considered vertex (we consider every vertex whether it was previously disconnected, removed or not). 
\begin{itemize}
\item For each vertex $v \in V'$ with $d'_{w, v} - x_v \leq X \leq d'_{w, v}$, 
we remove $v$ when it was neither removed nor {\em disconnected} previously. 
\item The vertices in $\{ v : d'_{w, v} < X \}$ are now disconnected from the rest of the graph. Say these vertices are {\em disconnected}. 
\end{itemize}
\end{itemize}

For each vertex $w \in V'$, let $B(w) := \{ v \in R' : d'_{w, v} \leq 2 \epsilon \}$. 
A simple averaging argument bounds $|B(w)|$. 

\begin{lem}
For each vertex $w \in V'$, $|B(w)| \leq \frac{k}{1 - 2 \epsilon}$.
\label{lem:ball3}
\end{lem}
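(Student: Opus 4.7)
The plan is to mirror the proof of Lemma~\ref{lem:ball} for the plain \pvv version, adapting it only to account for the fact that $B(w)$ now counts red vertices and that the LP constraint~\eqref{eq:radius3} in turn sums $f_{v,u}$ only over $u \in R$. Because these two restrictions are exactly parallel, no new ingredient is needed beyond the original averaging argument.

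First I would argue by contradiction: assume $|B(w)| > \frac{k}{1-2\epsilon}$. For every $u \in B(w)$, I would chain
\[
f_{w,u} \;\geq\; f'_{w,u} \;\geq\; 1 - d'_{w,u} \;\geq\; 1 - 2\epsilon,
\]
where the first inequality uses that deleting vertices only lengthens shortest paths (so $d'_{w,u} \geq d_{w,u}$, hence $f'_{w,u} \leq f_{w,u}$), the second is the LP constraint relating $f'$ and $d'$ as spelled out earlier in this subsection, and the third is the defining condition $d'_{w,u} \leq 2\epsilon$ of $B(w)$.

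Next I would exploit the crucial observation that $B(w) \subseteq R' \subseteq R$, so that restricting the LP sum to $R$ still captures the entire contribution from $B(w)$:
\[
\sum_{u \in R} f_{w,u} \;\geq\; \sum_{u \in B(w)} f_{w,u} \;\geq\; (1-2\epsilon)\,|B(w)| \;>\; k,
\]
which contradicts~\eqref{eq:radius3} and finishes the proof.

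The only subtlety to watch for, and the reason this lemma needs to be restated for the subset setting rather than cited verbatim, is that the ball $B(w)$ is now defined to lie in $R'$ rather than in $V'$. This restriction is precisely matched by the restriction of the summation index in~\eqref{eq:radius3} to $R$, so the argument goes through unchanged; there is no genuine obstacle beyond bookkeeping.
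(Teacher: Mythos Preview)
Your proposal is correct and follows essentially the same approach as the paper's own proof: assume $|B(w)| > \frac{k}{1-2\epsilon}$, observe that each $u \in B(w)$ contributes at least $1-2\epsilon$ to $f_{w,u}$, and since $B(w) \subseteq R$ this forces $\sum_{u \in R} f_{w,u} > k$, contradicting~\eqref{eq:radius3}. The only minor imprecision is calling $f'_{w,u} \geq 1 - d'_{w,u}$ an ``LP constraint''; it is rather the definition $f'_{u,v} := \max(1 - d'_{u,v}, 0)$ given earlier in the subsection, but this does not affect the argument.
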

\begin{proof}
Assume towards contradiction that $|B(w)| > \frac{k}{1 - 2 \epsilon}$. 
For all $u \in B(w)$, 
\[
f_{w, u} \geq f'_{w, u} \geq 1 - d'_{w, u} \geq 1 - 2 \epsilon.
\]
Furthermore, even for $u \notin B(w)$, our LP ensures that $f_{w, u} \geq 0$. 
Therefore, \[
\sum_{u \in R} f_{w, u} \geq 
\sum_{u \in B(w)} f_{w, u} \geq (1 - 2 \epsilon) |B(w)| > k,
\]
contradicting~\eqref{eq:radius3} of our LP. 
\end{proof}

Note that at the end of the algorithm, every red vertex is removed or disconnected, since every $w \in V'$ becomes removed or disconnected after being considered. Moreover, each connected component is a subset of $\{ v : d'_{w, v} < X \}$ for some $w \in V'$ and $X \leq \epsilon$, which is a subset of $B(w)$. Therefore, each connected component has at most $\frac{k}{1 - 2 \epsilon}$ red vertices. 
We finally analyze the probability that a vertex $v \in V'$ is removed. 
\begin{lem}
The probability that $v \in V'$ is removed is at most $O(\frac{\log k}{\epsilon}) \cdot x_v$. 
\end{lem}
\begin{proof}
Fix a vertex $v \in V'$. 
When $w \in R'$ is considered, $v$ can be possibly removed only if 
\begin{align*}
& d'_{v, w} - x_v \leq \epsilon \\
\Rightarrow \,\, & d'_{v, w} \leq 2\epsilon \qquad \qquad (\mbox{since } x_v \leq \epsilon) \\
\Rightarrow \, \, & w \in B(v).
\end{align*}
Let $W = \{ w_1, \dots, w_p \}$ be such vertices such that $d'_{v, w_1} \leq \dots \leq d'_{v, w_p} \leq 2 \epsilon$. By Lemma~\ref{lem:ball3}, $p \leq \frac{k}{1 - 2 \epsilon}$. 

Fix $i$ and consider the event that $v$ is removed when $w_i$ is considered. This happens only if $d'_{v, w_i} - x_v \leq X \leq d'_{v, w_i}$. 
For fixed such $X$, a crucial observation is that if $w_j$ with $j < i$ is considered before $w_i$, since $d'_{v, w_j} - x_v \leq X$, $v$ will be either removed or disconnected when $w_j$ is considered. In particular, $v$ will not be removed by $w_i$. Given these observations, the probability that $v$ is removed is bounded by 

\begin{align*}
\Pr[v \mbox{ is removed}] 
&= \sum_{i = 1}^p \Pr[v \mbox{ is removed when } w_i \mbox { is considered}] \\
&= \sum_{i = 1}^p \Pr[X \in [d'_{v, w_i} - x_v, d'_{v, w_i}] \mbox{ and $w_i$ comes before $w_1, \dots, w_{i - 1}$ in $\pi$}] \\
&\leq \sum_{i = 1}^p \frac{2x_v}{\epsilon i} = x_v \cdot O(\frac{\log p}{\epsilon}) = x_v \cdot O(\frac{\log k}{\epsilon}).
\end{align*}
\end{proof}

Therefore, the low-diameter decomposition removes at most $O(\frac{\log k}{\epsilon}) \cdot \sum_v x_v \leq O(\frac{\log k}{\epsilon}) \cdot \OPT$ vertices so that each resulting connected component has at most $\frac{k}{1 - 2 \epsilon}$ red vertices. This gives a bicriteria approximation algorithm that runs in time $poly(n, k)$, proving Theorem~\ref{thm:subset}.

\end{document}